\title{The Normalized Edit Distance with Uniform Operation Costs is a Metric}  
\titlerunning{The Normalized Edit Distance with Uniform Operation Costs is a Metric} 
\author{Dana Fisman}{Dept. of Computer Science, Ben-Gurion University, Beer-Sheva, Israel}{dana@cs.bgu.ac.il}{}{}
\author{Joshua Grogin}{Dept. of Computer Science, Ben-Gurion University, Beer-Sheva, Israel}{joshuag@post.bgu.ac.il}{}{}
\author{Oded Margalit}{Dept. of Computer Science, Ben-Gurion University, Beer-Sheva, Israel}{odedm@post.bgu.ac.il}{}{}
\author{Gera Weiss}{Dept. of Computer Science, Ben-Gurion University, Beer-Sheva, Israel}{geraw@cs.bgu.ac.il}{}{}
\authorrunning{Fisman, Grogin, Margalit, Weiss} 
\newcommand{\commentout}[1]{}
\renewcommand{\paragraph}[1]{\vspace{3mm}{\emph{\fontsize{11}{12}\sffamily\bfseries{#1}}}}
\newcommand{\ed}{\textsc{ed}\xspace}
\newcommand{\ned}{\textsc{ned}\xspace}
\newcommand{\ced}{\textsc{ced}\xspace}
\newcommand{\cedprime}{\textsc{ced'}\xspace}
\newcommand{\ged}{\textsc{ged}\xspace}
\newcommand{\ptime}{\textsc{Ptime}}
\newcommand{\editsymbol}[1]{{\text{\texttt{#1}}}\xspace}
\newcommand{\esl}{\editsymbol{l}\xspace}
\newcommand{\esx}{\editsymbol{x}\xspace}
\newcommand{\esv}{\editsymbol{v}\xspace}
\newcommand{\esb}{\editsymbol{b}\xspace}
\newcommand{\esd}{\editsymbol{n}\xspace}
\newcommand{\esc}{\editsymbol{c}\xspace}
\theoremstyle{property}
\newtheorem{property}[theorem]{Property}
\theoremstyle{Claim}
\newtheorem{Claim}[theorem]{Claim}
\newenvironment{customthm}[1]
{\innercustomthm}
{\endinnercustomthm}
\newcommand{\blank}{\ensuremath{\texttt{\_}}}
\newcommand{\blanknew}{\ensuremath{\sigma_{new}}}
\newcommand{\theysaid}[1]{{\textsl{``#1''}}}
\newcommand{\weight}{\mathop{wgt}}
\newcommand{\len}{\mathop{len}}
\newcommand{\cost}{\mathop{cost}}
\newcommand{\apply}{\mathop{apply}}
\newcommand{\compose}{\mathop{cmps_h}}
\newcommand{\Compose}{\mathop{cmps}}
\newcommand{\ignore}[1]{}
\newcommand{\nobibentry}[1]{{\let\nocite\ignore\bibentry{#1}}}
\keywords{edit distance,  normalized distance, triangle inequality, metric} 
\begin{document}

\maketitle

\begin{abstract}
 We prove that the normalized  edit distance proposed in [Marzal and Vidal 1993] is a metric when the cost of all the edit operations are the same. This closes a long standing gap in the literature where several authors noted that this distance does not satisfy the triangle inequality in the general case, and that it was not known whether it is satisfied in the uniform case --- where all the edit costs are equal.
 We compare this metric to two normalized metrics proposed as alternatives in the literature, when people thought that Marzal's and Vidal's distance is not a metric, and identify key properties that explain why the original distance, now known to also be a metric, is better for some applications. Our examination is from a point of view of formal verification, but the properties and their significance are stated in an application agnostic way.
 \end{abstract}

\section{Introduction}
\label{sec:intro}
The \emph{edit distance}~\cite{Levenshtein66}, also called \emph{Levenshtein distance}, is the minimal number of insertions, deletions or substitutions of characters needed to edit one word into another. This is a commonly used measure of the distance between strings. It is used in error correction, pattern recognition, computational biology, and other fields where the data is represented by strings.  

One limitation of the edit distance is that it does not contain a normalization with respect to the lengths of the compared strings. This limits its use because, in many applications, having many edit operations when comparing short strings is more  significant than having the same number of edit operations in a comparison of longer strings, i.e., some applications require a measure that  captures the `average' number of operations per letter, in some sort. 

There are several approaches in the literature to add a normalization factor to the edit distance, as follows. The simplest idea that comes to mind is, of course, to divide the edit distance by the sum of lengths of the strings. However, Vidal and Marzal~\cite{MarzalV93} showed that this function, termed \emph{post-normalized edit distance} in~\cite{MarzalV93}, does not satisfy the triangle inequality, and thus is not a metric. Dividing by the length of the minimal or maximal among the strings also breaks the triangle inequality~\cite{HigueraM08}.
%
The fact that a distance measure is (or is not) a metric allows (resp. prevents) optimizations in many applications. For example, many efficient algorithms for searching shortest paths in graphs, such as Dijkstra's algorithm, make use of the fact that the underlying distance is a metric.

Vidal and Marzal propose thus another function, that we will focus on in this paper, that they term \emph{the normalized edit distance} (\ned) and say that this function, \theysaid{seems more likely to fulfill the triangle inequality}.
They however, show that when the sum of the costs of deleting and inserting a particular symbol is much smaller than any other elemental edit cost the function that they suggest is also non-triangular. The question of whether this distance is triangular in less contrived situations is given only an empirical answer --- \theysaid{triangular behavior has actually been observed in practice for the normalized edit distance}. 
This state of affairs opened the way for attempts to define edit distance functions that are normalized and satisfy the triangle inequality, 
as discussed in the following two paragraphs.%
\footnote{The complexity of computing \ned\ was first shown to be $O(mn^2)$ with experimental data that suggested that it is actually $O(mn)$~\cite{VidalMA95}. It was later proven to be $O(mn \log n)$ in the uniform case~\cite{arslan2000efficient}. Here, $n\geq m$ are the lengths of the compared words.}

Li and Liu~\cite{YujianB07} proposed an alternative normalization method. They open their paper by saying that \theysaid{Although a number of normalized edit distances presented so far may
offer good performance in some applications, none of them can be regarded as a genuine metric between strings because they do not satisfy the triangle inequality}. They, then, define a new distance, \emph{the generalized edit distance} (\ged), that is a simple function of the lengths of the compared strings and the edit distance between them and show that it 
is a metric.

De la Higuera and Mi\`{c}o~\cite{HigueraM08} propose the \emph{contextual normalised edit distance} (\ced).
Their normalization goes by  dividing each edit operation locally by the length of the string on which it is applied. Specifically, instead of dividing the total edit costs by the length of the edit path, they propose to divide the cost of each edit operation by the length of the string at the time of edit. They prove that this is a metric, provide an efficient approximation procedure for it, and demonstrate its performance in several application domains. 

In this paper we prove that \ned, the original edit normalization approach proposed by Vidal and Marzal~\cite{MarzalV93} does satisfy the triangle inequality when the cost of all the edit operations are the same. Since this setup is very common in many applications of the edit distance, our result gives a simple normalization technique that satisfies the triangle inequality. While there are other normalized edit distance functions that are a metric, in particular the two mentioned above (\ged and \ced), their definition is more complicated and they capture a different notion of distance than that of \ned.

The motivation that led us to engage in distances between words came from the field of formal methods; specifically, for software verification. 
In this field, it is customary to represent runs of a system using words and analyze the relationship between the set of words that satisfy 
a given specification and the set of words that the system under examination produces. Naturally, the main question asked is whether there is a word that the system produces that does not satisfy the requirement, but an appropriate concept of distance opens up the possibility of asking further questions. For example, for systems that meet the specifications, the robustness question would be, ``is there a run that is closer than a given threshold to not meeting the requirements?''. In this context, we would like the distance to measure how much ``disturbance'' in a word we can afford without risking non-compliance. Naturally, since editing model symmetric disturbances, we use uniform weights. As we will explain in \autoref{sec:comparison} below, 
 the \ned distance satisfies certain properties required for use in formal the field of formal methods
 that other metrics do not. Another advantage of \ned in the context of formal methods is that its definition allows direct use of a \ptime\ algorithm 
proposed by Filliot et al.~\cite{FiliotMR0020}
for computing the distance between regular sets of words represented using finite automata. 
This is useful since verification tools work with automata to represent the specification and the program runs, and verification questions are usually reduced to questions on automata.

\vspace{-2mm}
\section{Preliminaries}
\label{sec:prelim}
\vspace{-1mm}
Let $\Sigma$ be a finite alphabet and $\Sigma^*$ the set of all finite strings over $\Sigma$. The length of string $w=\sigma_1\sigma_2\ldots\sigma_n$, denoted $|w|$, is $n$.
We use $w[i]$ for the $i$-th letter of $w$, and $w[i..]$ for the suffix of $w$ starting at $i$, namely $w[i..]=\sigma_i\sigma_{i+1}\ldots\sigma_n$.
 
\paragraph{Basic and extended edit letters}
The literature on defining distance between words over $\Sigma$ uses the notion of \emph{edit paths}, which are strings over \emph{edit letters} defining how to transform a given string $s_1$ to another string $s_2$.
The standard operations are \emph{deleting} a letter, \emph{inserting} a letter, or \emph{swapping} one letter with another letter. Formally, the \emph{basic edit letters alphabet} $\Gamma$ is defined as 
$\Gamma=\{\esd, \esc, \esv, \esx\}$ where:
\vspace{-2mm}
\begin{multicols}{2}
\begin{itemize}
	\item \esc stands for \emph{change}: 
	the relevant letter in the source string is replaced with another letter.  
	\item \esv stands for \emph{insert}: 
	a new letter is added to the destination string.  \\
	\item \esx stands for \emph{delete}: 
	the current letter from the source string is deleted and not copied to the destination string.  
	\item \esd stands for  \emph{no-change}: 
	 the current letter is copied as is from the source string to the destination string.  
\end{itemize}
\end{multicols}
\vspace{-3mm}
The edit letters in $\Gamma$ do not carry enough information to transform a string $w$ over $\Sigma$ to an unknown string  over $\Sigma$,
since for instance the letter $\esv$ does not provide information on which letter $\sigma\in\Sigma$ should be inserted. To this aim we define
the alphabet   
$\Gamma_\Sigma$ that provides all the information required. Formally, 
$\Gamma_\Sigma = \{ \texttt{l}_{\sigma}|~\sigma\in\Sigma, \texttt{l}\in\{\esd,\esv,\esx\}\} \cup
\{ \esc_{(\sigma_1,\sigma_2)}|~\sigma_1,\sigma_2\in\Sigma\}$.
We call strings over $\Gamma_\Sigma$ \emph{edit paths}. 
Throughout this document we use $w,w_1,w_2,w',\ldots$ and $s,s_1,s_2,s',\ldots$ for strings over $\Sigma$ and 
$p,p_1,p_2,p',\ldots$ for edit paths.

\paragraph{Weights and length of edit paths}\label{par:weights}
Given a function $\weight\colon\Gamma_\Sigma\rightarrow \mathbb{N}$, that defines  a weight to each
edit letter, we define the weight of an edit path  $\weight\colon\Gamma_{\Sigma}^*\rightarrow \mathbb{N}$ 
as the sum of weights of the letter it is composed from,
namely for an edit path $p=\gamma_1\gamma_2\ldots\gamma_m\in\Gamma^*_\Sigma$, $\weight(\gamma_1\ldots\gamma_m) =\sum_{i=1}^{m}\weight(\gamma_i)$.

In our case we are interested in \emph{uniform} costs where the weight of
$\esd$ is $0$ and the weight of all other operations is the same. For simplicity
we can assume that the weight of all other operations is $1$.
Thus, we can define the weight over $\Gamma$ instead of $\Gamma_\Sigma$ simply as  
$\weight\colon\Gamma\rightarrow \mathbb{N}$ where $\weight(\gamma)=0$ if $\gamma = \esd$ and $\weight(\gamma)=1$ otherwise, namely if $\gamma\in\{\esc, \esv, \esx \}$.
\commentout{
\[\weight(\gamma)=\begin{cases} 
0 & \text{if } \gamma = \esd\\
1 &  \text{if } \gamma \in\{\esc, \esv, \esx \}\\
\end{cases}\]}
We also define the function  $\len\colon\Gamma_\Sigma\rightarrow \mathbb{N}$ as $\len(\gamma)=1$
and  $\len\colon\Gamma_\Sigma^*\rightarrow \mathbb{N}$ as $\len(\gamma_1\ldots\gamma_m) =\sum_{i=1}^{m}\len(\gamma_i)$.
Clearly here we have $\len(p)=|p|$. Later on we will introduce new edit letters whose length is different from $1$, thus the need for a definition of $\len$ that is not just the count of letters.

\begin{example}\label{example:edit-path}
Let $w_1=abcd$ and $w_2=badee$. Then $p=\esx_{a}\cdot \esd_b \cdot \esc_{c,a}\cdot\esd_d\cdot\esv_e\cdot\esv_e$ is an edit path transforming $w_1$ to $w_2$.
We have that $\weight(p)=\weight(\esx\esd\esc\esd\esv\esv)=4$ and $\len(p)=6$.
\end{example}
\vspace{-3mm}
\paragraph{Applying an edit path to a string}
Given a string $w$ over $\Sigma$, and
an edit path $p$ over $\Gamma_\Sigma$ we can now define the result
of applying $p$ to $w$.
\begin{definition}
We define a function $\apply \colon \Sigma^*\times \Gamma_\Sigma^* \rightarrow (\Sigma \cup \{\bot\})^*$ that 
given a string $w$ over $\Sigma$, and
an edit path $p$ over $\Gamma_\Sigma$  returns a new string $w'$ over $\Sigma \cup \{\bot\}$. If $p$ is a valid edit path for $w$ it  returns a string over $\Sigma$, otherwise a string that contains $\bot$.
\textup{
\[
\apply(p, w)=
\left\{ 
\begin{array}{ll}
\varepsilon & \text{if } p=w=\varepsilon \\ 
\sigma' \cdot \apply(p[2..], w) & \text{if } p[1]{=}\esv_{\sigma'}\\
\sigma' \cdot \apply(p[2..],w[2..])  & \text{if } p[1]{=}\esc_{(\sigma,\sigma')} \text{ and } w[1]{=}\sigma\\
\sigma\cdot \apply(p[2..],w[2..]) & \text{if } p[1]{=}\esd_\sigma   \text{ and } w[1]{=}\sigma\\
\apply(p[2..],w[2..])  & \text{if } p[1]{=}\esx_\sigma   \text{ and } w[1]{=}\sigma\\
\bot & \text{otherwise}
\end{array}
\right.\]
}
\end{definition}

We say that a string $p_{ij}$ over $\Gamma_\Sigma$ is an \emph{edit path from} string $s_i$ to string $s_j$ over $\Sigma$ if $\apply(p_{ij},s_i)=s_j$. With a bit of overriding, we say that a string $p_{ij}$ over $\Gamma$ is an \emph{edit path from} strings $s_i$ to $s_j$ over $\Sigma$ if there exists an extension of $p_{ij}$ with subscripts from $\Sigma$ that results in an edit path  from $s_i$ to $s_j$.
  
\begin{example}\label{example:apply}
	Following on \autoref{example:edit-path}, we have that $\apply(\esx_{a} \esd_b  \esc_{c,a} \esd_d \esv_e \esv_e,abcd)=badee$,
	and that $\esx\esd\esc\esd\esv\esv$ is an edit path from $abcd$ to $badee$.
\end{example}  

\paragraph{The normalized edit distance} 
Let $p$ be an edit path. The \emph{cost} of $p$, denoted $\cost(p)$
is defined to be the weight of $p$ divided by the length of $p$, if the length is not zero,
and zero otherwise. That is, $\cost(p)=0$ if $|p|=0$ and $\cost(p)=\frac{\weight(p)}{\len(p)}$ otherwise.
 
Using the definition of $\cost$ we can define the notion we study in this paper, namely
the \emph{normalized edit distance}, \ned, of Marzal and Vidal~\cite{MarzalV93}.  
\begin{definition}[The normalized edit distance, \ned~\cite{MarzalV93}]\label{def:ned}
  The \emph{normalized edit distance} between $s_i$ and $s_j$, denoted $\ned(s_i,s_j)$ is 
   the minimal cost of an edit path $p_{ij}$ from $s_i$ to $s_j$.
  That is, 
  \[\ned(s_i,s_j)=\min\left\{ \left. \cost(p_{ij}) ~\right|~ p_{ij}\in\Gamma_\Sigma^* \text{ and } \apply(p_{ij},s_i)=s_j\right\}\]
\end{definition}
Note that while, in general, $\weight$ may assign arbitrary weights to edit letters, in this paper we assume the uniform weights as defined above.

\begin{example}\label{example:three}
	Let $\Sigma=\{a,b,c\}$, $s_1=acbb$ and $s_2=cc$. Then the string $\esx\esd\esx\esc$ denotes an edit path taking $s_1$, deleting the first letter ($a$), copying the second letter ($c$), deleting the third letter ($b$), and replacing the fourth letter ($b$) by $c$. This edit path indeed transforms $s_1$ to $s_2$. Its cost is $\frac{1+0+1+1}{4}=\frac{3}{4}$. It is not hard to verify that this cost is minimal, therefore $\ned(s_1,s_2)=\frac{3}{4}$.
\end{example}

\paragraph{The alignment view}
Recall that distance functions defined by dividing the weight by the sum, max or min of the given strings does not yield a metric\cite{MarzalV93,HigueraM08}.
The main contribution of the paper is to show that the choice  to use the length of the edit path in the denominator, makes the resulting definition, \ned, a metric.
To understand the motivation behind dividing by the length of the edit path, note that an edit path can be thought of as defining an alignment between the given words $s_1$ and $s_2$ 
by padding the first string with some blank symbol, denote it $\blank$, whenever an insert operation is conducted, and padding the second string with $\blank$ symbols whenever
 a delete operation is conducted. The resulting words $s'_1$ and $s'_2$ would thus be of the same length, and the weight of the edit path would correspond to the Hamming distance between the words. (The Hamming distance applies only to words of same length and counts the number of positions $i$ in which the two words differ.) When dealing with words of the same length it makes sense to normalize them by dividing by their length, and the length of the padded words equals the length of the edit paths.

\begin{example}
	In \autoref{example:three} we used $s_1=acbb$, $s_2=cc$. The edit path $\esx\esd\esx\esc$ corresponds to the alignment $s'_1=acbb$ and $s'_2=\blank c \blank c$,
	and since the length of $s'_1$ and $s'_2$ is $4$ and they differ in all positions but one the corresponding cost is $3/4$.  
	 
	In \autoref{example:edit-path}, we used $w_1=abcd$ and $w_2=badee$ and considered the edit path $\esx\esd\esc\esd\esv\esv$.
	This path correspond to the alignment $w'_1= abcd\blank\blank$ and $w'_2=\blank b a d e e$. 	
	Since $w'_1$ and $w'_2$ differ in four out of the six positions, we have that the cost of this path is $4/6$.
\end{example}

	\paragraph{A metric space}
A metric space is an ordered pair $(\mathbb{M},d)$ where $\mathbb{M}$ is 
a set and $d\colon\mathbb{M}\times\mathbb{M}\rightarrow\mathbb{R}$ is a \emph{metric},
i.e., it satisfies the following for all $m_1,m_2,m_3\in\mathbb{M}$:
\vspace{-4mm}
\begin{multicols}{2}
\begin{enumerate}
	\item $d(m_1,m_2)=0$ iff $m_1=m_2$;
	\item $d(m_1,m_2)=d(m_2,m_1)$;
	\item $d(m_1,m_3)\leq d(m_1,m_2)+d(m_2,m_3).$
\end{enumerate}
\end{multicols}
\noindent
The first condition is referred to as \emph{identity of indiscernibles}, the second as \emph{symmetry}, the third as the \emph{triangle inequality}.

\paragraph{Basic properties of NED}
It is not hard to see that \ned\ satisfies the first and second condition of being a metric.
The following proposition establishes that the distance of a string to itself, according to \ned, is zero, and that the distance between two strings is symmetric.
\begin{proposition}\label{prop:reflexivity-and-symmetry}
Let $s,s_1,s_2\in\Sigma^*$. Then
\vspace{-4mm}
\begin{multicols}{2}
\begin{enumerate} 
\item $\ned(s,s)=0$ 
\item if $s_1\neq s_2$ then $\ned(s_1,s_2)>0$
\item $\ned(s_1,s_2)=\ned(s_2,s_1)$
\end{enumerate}
\end{multicols}
\end{proposition}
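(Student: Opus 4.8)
The plan is to treat the three items separately; the first two follow directly from the recursive definition of $\apply$, and the third rests on a path-reversal construction.

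For item~(1), I would exhibit the all-$\esd$ path. Writing $s = \sigma_1\cdots\sigma_n$, set $p = \esd_{\sigma_1}\esd_{\sigma_2}\cdots\esd_{\sigma_n}$, taking $p = \varepsilon$ when $n = 0$. An easy induction on $n$ against the definition of $\apply$ gives $\apply(p,s) = s$, so $p$ is an edit path from $s$ to $s$; since every letter of $p$ is $\esd$, we have $\weight(p) = 0$ and hence $\cost(p) = 0$. Because $\weight \geq 0$ and $\len(p) > 0$ whenever $|p| > 0$, every edit path has nonnegative cost, so the minimum defining $\ned(s,s)$ is indeed $0$.

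For item~(2), suppose toward a contradiction that $\ned(s_1,s_2) = 0$, witnessed by an edit path $p$ with $\apply(p,s_1) = s_2$ and $\cost(p) = 0$. If $|p| = 0$ then $p = \varepsilon$, and the definition of $\apply$ forces $s_1 = \varepsilon$, hence $s_2 = \varepsilon = s_1$. If $|p| > 0$ then $0 = \cost(p) = \weight(p)/\len(p)$ gives $\weight(p) = 0$, and since $\esd$ is the unique edit letter of weight $0$, $p$ is a word over $\{\,\esd_\sigma \mid \sigma \in \Sigma\,\}$. A short induction then shows that applying such a path to a string either yields a string containing $\bot$ or returns that string unchanged; as $\apply(p,s_1) = s_2 \in \Sigma^*$, the latter case holds and $s_2 = s_1$. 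In either case this contradicts $s_1 \neq s_2$.

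For item~(3), I would define a reversal map on edit letters by $\overline{\esd_\sigma} = \esd_\sigma$, $\overline{\esv_\sigma} = \esx_\sigma$, $\overline{\esx_\sigma} = \esv_\sigma$, and $\overline{\esc_{(\sigma,\sigma')}} = \esc_{(\sigma',\sigma)}$, extended to edit paths letterwise by $\overline{\gamma_1\cdots\gamma_m} = \overline{\gamma_1}\cdots\overline{\gamma_m}$. The crux is the lemma that for all $s, s' \in \Sigma^*$ and every edit path $p$, $\apply(p,s) = s'$ if and only if $\apply(\overline{p},s') = s$, proved by strong induction on $|p| + |s|$ with a case split on $p[1]$ mirroring the clauses in the definition of $\apply$: the $\esv$ and $\esx$ clauses exchange roles, while the $\esd$ and $\esc$ clauses are self-dual, a $\esc_{(\sigma,\sigma')}$ step reading $\sigma$ and writing $\sigma'$ so that its reverse reads $\sigma'$ and writes $\sigma$. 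Since the reversal map sends weight-$0$ letters to weight-$0$ letters and weight-$1$ letters to weight-$1$ letters, $\weight(\overline{p}) = \weight(p)$, and trivially $\len(\overline{p}) = |p| = \len(p)$, so $\cost(\overline{p}) = \cost(p)$. Thus $p \mapsto \overline{p}$ is a cost-preserving bijection between the edit paths from $s_1$ to $s_2$ and those from $s_2$ to $s_1$, so the two minima coincide and $\ned(s_1,s_2) = \ned(s_2,s_1)$.

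The only step with real content is the biconditional lemma in item~(3); the rest is bookkeeping against the recursive definition of $\apply$. I expect that induction to go through smoothly once the measure $|p| + |s|$ is fixed, since the $\esv$ clause strictly decreases $|p|$ while leaving $|s|$ fixed, and the $\esd$, $\esx$ and $\esc$ clauses decrease both.
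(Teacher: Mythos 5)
Your proposal is correct and follows essentially the same route as the paper's own proof: the all-$\esd$ path for item (1), the observation that only weight-zero (i.e., all-$\esd$) paths can have cost zero for item (2), and the letterwise reversal map $\esv_\sigma \leftrightarrow \esx_\sigma$, $\esc_{(\sigma,\sigma')} \mapsto \esc_{(\sigma',\sigma)}$ giving a cost-preserving bijection for item (3). You spell out the inductions the paper leaves implicit, but there is no substantive difference in approach.
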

Its straight forward proof can be found in the archived version~\cite{NEDarxiv}.

The challenge is proving that \ned satisfies the third condition, the triangle inequality. 
We do this in \autoref{sec:triang}. Before that we investigate some properties of \ned\ and other 
edit distance functions.

\section{Properties of the various normalized edit distance functions} 
 
\subsection{Other edit distance functions}
\label{sec:other-edit-distances}
In the introduction we mentioned several edit distance functions known to be a metric.
We use the term \emph{edit distance}  for functions  between words to values that are based on \emph{delete}, \emph{insert} and \emph{swaps}.

In general these definition may allow arbitrary weight assignment to edit letters, but we consider the case of uniform weights.
We start by introducing the edit distance functions, \ed, \ged, and \ced, and then turn to compare their properties, with those of \ned.

We start with the commonly used \emph{edit distance}, introduced by Levenstein~\cite{Levenshtein66}. 
\begin{definition}[The  edit (Levenstein) distance, \ed]
   The \emph{ edit distance} between $s_i$ and $s_j$, denoted $\ed(s_i,s_j)$, is 
   the minimal  weight of a path $p_{ij}$
  from $s_i$ to $s_j$.
  That is, 
  \[\ed(s_i,s_j)=\min \left\{  \weight(p_{ij}) ~|~  p_{ij}\in\Gamma_\Sigma^* \text{ and } \apply(p_{ij},s_i)=s_j
                        \right\}\]
\end{definition}
This function is a metric, but  it completely ignores
the lengths of the words, thus it is not {normalized}.

We turn to introduce the \emph{generalized normalized edit distance} proposed and proven to be a metric by Li and Liu~\cite{YujianB07}. 
\begin{definition}[The  generalized edit distance]
$\ged(s_i,s_j)= \frac{2\cdot \ed(s_i,s_j)}{|s_i|+|s_j|+\ed(s_i,s_j)}$.
\end{definition}

Last, we define of the \emph{contextual edit distance}, proposed and proven to be a metric by de la Higuera and
               Mic{\'{o}}~\cite{HigueraM08}. 
It starts with a definition of distance between two strings whose Levenstein distance is $1$,
from which it builds the distance for an arbitrary set of words, by looking at a sequence of intermediate
transformations.
\begin{definition}[The  contextual edit distance]\label{def:ced}
Let $s,s'$ be such that $\ed(s,s')=1$ their contextual edit distance is defined by
$\ced(s,s')= \frac{1}{\max(|s|,|s'|)}$.
Note that given $\ed(s,s')=1$ the difference between the lengths of $s$ and $s'$ is at most one, thus $\max(|s|,|s'|)\leq \min(|s|,|s'|)+1$.
                    
Given a sequence of strings $\alpha=(s_0,s_1,\ldots,s_k)$ such that 
$ \ed(s_i,s_{i+1})=1$ for all $0\leq i<k$, one can define 
$\ced(\alpha)=\sum_{i=1}^k \ced(s_{i-1},s_i).$
To define the contextual edit distance between arbitrary strings $s_x$ and $s_y$
one considers the minimum of $\ced(\alpha)$ among all sequence of strings $\alpha=s_0,s_1,\ldots,s_k$ as above such that $s_0=s_x$, $s_k=s_y$.
That is, 
$\ced(s_x,s_y)=\min\left\{ \ced(\alpha)~\left|~\begin{array}{l} 
                                        \alpha=(s_0,s_1,\ldots,s_k), \  
                                        s_0=s_x,\ s_k=s_y, \ 
                                        \ed(s_i,s_{i+1})=1 
                                        \end{array}\right.\right\}$.
\end{definition}

\subsection{Comparison to other edit distance functions}\label{sec:compare}
\label{sec:comparison}

Comparing \ned\ and \ed\ is easy.  The \ned\ distance (like \ced\ and \ged) measures the average number edits, not just the total count.
To see why this is needed, consider two short words $x_1,x_2$ that differ in $k$ letters and two long word $y_1,y_2$ that also differ in $k$
letters. In the context of software verification, for example, the latter represent runs that are more similar to one another than the former. We thus, expect the distance between the $y$s to be less than the distances between the $x$s 
but this is not the case in \ed, as can be observed by inspecting the following words.
\vspace{-2mm}
$$\begin{array}{l@{\qquad}l}
\ed(aabcde,abpcg)=4 & \ned(aabcde,abpcg)={4}/{7}\\ 
\ed(a^{96}b^4,a^{100})=4 & \ned(a^{96}b^4,a^{100})={4}/{100}\\ 
\end{array}$$

We turn to a comparisons of $\ned$ with the other normalized edit distances, \ged\ and \ced. Usually, being normalized means that the values of the distance functions are bounded within a given range, but this is not always the case.
The lower bound is clearly $0$ for \ned, \ged, and \ced, since they are metric. The upper value of \ned\ and \ged\ is $1$ but the values for \ced\ are not bounded:

\begin{Claim}
	The values of \ned and \ged cannot exceed $1$ and may reach $1$, the values of \ced are unbounded.
\end{Claim}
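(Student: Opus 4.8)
The claim bundles together three separate assertions, and I would handle each in turn. For the upper bound on \ned, recall that $\ned(s_i,s_j)$ is the minimum of $\cost(p)=\weight(p)/\len(p)$ over edit paths $p$ from $s_i$ to $s_j$. For any fixed letter $\gamma$ of an edit path, $\weight(\gamma)\leq 1 = \len(\gamma)$, so summing gives $\weight(p)\leq\len(p)$ for every non-empty $p$; hence $\cost(p)\leq 1$ for every path and therefore $\ned(s_i,s_j)\leq 1$. The same bound holds for \ged: since $\ed(s_i,s_j)$ is achieved by a path $p$ without $\esd$-letters longer than necessary, one has $\ed(s_i,s_j)\leq\max(|s_i|,|s_j|)\leq|s_i|+|s_j|$, and plugging $t=\ed(s_i,s_j)$ into $\tfrac{2t}{|s_i|+|s_j|+t}$, which is increasing in $t$ for fixed $|s_i|,|s_j|$, shows the value is at most $\tfrac{2t}{t+t}=1$ whenever $t\leq|s_i|+|s_j|$.

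For the \emph{attainability} of the value $1$, I would exhibit a concrete pair of strings. Take $s_1=a$ and $s_2=b$ with $a\neq b$ (any single-letter change), so the only length-$1$ edit path between them is $\esc$, giving $\cost=1/1=1$; there is no cheaper path because any path of length $\ell$ realizing the transformation must contain at least one non-$\esd$ letter (else the strings would be equal), and in fact here the optimum is $1$. More robustly, one can take $s_1=\varepsilon$ (or a string) and $s_2$ obtained by a single insertion to a string sharing no letters — but the cleanest witness is simply $s_1=a$, $s_2=b$: every edit path from $a$ to $b$ has weight equal to its length (I would argue that a path with an $\esd$ forces both strings to agree on that position, which is impossible since $|s_1|=|s_2|=1$ and $s_1\neq s_2$), so $\ned(a,b)=1$. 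For \ged with the same pair, $\ed(a,b)=1$ and $|s_1|+|s_2|=2$, giving $\ged(a,b)=\tfrac{2\cdot 1}{2+1}=\tfrac{2}{3}<1$; to reach $1$ for \ged one needs $\ed(s_i,s_j)$ large relative to $|s_i|+|s_j|$, e.g. $s_1=a$, $s_2=bb$ gives $\ed=2$, $|s_1|+|s_2|=3$, $\ged=\tfrac{4}{5}$, and taking $s_1=\varepsilon$, $s_2=a^k$ gives $\ed=k=|s_1|+|s_2|$ and $\ged=\tfrac{2k}{2k}=1$, or one notes $\ged\to 1$ in the limit; I would pick whichever single example makes the supremum attained, namely $s_1=\varepsilon$ and $s_2$ any nonempty word, where $\ed(s_1,s_2)=|s_2|=|s_1|+|s_2|$ so $\ged=1$ exactly.

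For the \emph{unboundedness} of \ced, I would use that $\ced(s_x,s_y)$ sums local costs $\tfrac{1}{\max(|s_{i}|,|s_{i+1}|)}$ over a chain of unit-\ed\ steps. Consider deleting all letters of $a^n$ one at a time down to $\varepsilon$: the chain $a^n, a^{n-1}, \dots, a^1, \varepsilon$ has $n$ steps, and the $i$-th step (from $a^{i}$ to $a^{i-1}$) contributes $\tfrac{1}{\max(i,i-1)}=\tfrac{1}{i}$, so the total along this particular chain is the harmonic sum $\sum_{i=1}^{n}\tfrac{1}{i}$, which diverges as $n\to\infty$. Since $\ced$ is defined as a minimum over chains, this upper-bounds nothing by itself; the point I need is a \emph{lower} bound on $\ced(a^n,\varepsilon)$. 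Here I would observe that any chain from $a^n$ to $\varepsilon$ must, at each step, change the length by exactly one (a unit-\ed\ step between strings that — after taking the minimizing chain — one may assume are all powers of $a$, since inserting or changing any non-$a$ letter only adds cost and must later be removed), so the chain has length at least $n$ and visits strings of every length from $0$ to $n$; grouping the steps by the larger of the two lengths involved, the cost is at least $\sum_{i=1}^{n}\tfrac{1}{i}\to\infty$. The main obstacle is precisely this last point: justifying that the optimal \ced-chain between $a^n$ and $\varepsilon$ gains nothing by detouring through strings with other letters or non-monotone lengths. I would handle it by a short exchange argument — any step that inserts a letter or changes a letter can be matched with a later step undoing it, and splicing both out (or replacing a change by a delete) never increases the summed local cost because removing two steps removes two positive terms while the remaining local denominators only grow — reducing to the monotone all-$a$ chain whose cost is the harmonic number $H_n$, which is unbounded.
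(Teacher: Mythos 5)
Your treatment of \ned\ and \ged\ is correct and matches the paper's: weight of a path never exceeds its length, hence $\cost(p)\le 1$; for \ged, $\ed(s_i,s_j)\le |s_i|+|s_j|$ forces the ratio below $1$; and the bound is attained at $(\varepsilon,a)$ (the paper's witness) or at your $(a,b)$ for \ned\ and $(\varepsilon,a^k)$ for \ged. Where you genuinely diverge is on \ced: the paper simply asserts $\ced(\varepsilon,a^i)=1+\frac12+\cdots+\frac1i$ and invokes divergence of the harmonic series, whereas you correctly notice that, \ced\ being a \emph{minimum} over chains, exhibiting the monotone deletion chain only gives an upper bound and a lower bound is what is actually needed. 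That is a real gap in the paper's write-up and you are right to address it. However, the clean way to close it is the length-crossing argument you gesture at when you say ``grouping the steps by the larger of the two lengths involved'': in \emph{any} chain from $a^n$ to $\varepsilon$ each unit-\ed\ step changes the length by at most one, so for every $i\in\{1,\dots,n\}$ some step goes from a string of length $i$ to one of length $i-1$ and contributes $\frac{1}{\max(i,i-1)}=\frac1i$; these steps are distinct for distinct $i$, so the cost of every chain is at least $H_n$. This works for arbitrary chains (non-monotone lengths, letters other than $a$) and makes your exchange argument unnecessary — which is fortunate, because that argument's justification is backwards: splicing out an insertion together with the deletion that undoes it \emph{shortens} the intermediate strings, so the remaining local denominators $\max(|s_j|,|s_{j+1}|)$ shrink rather than grow and the surviving terms increase, so ``never increases the summed local cost'' does not follow as stated. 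Drop the exchange step, keep the per-level lower bound, and your proof of the \ced\ part is both complete and strictly more rigorous than the paper's.
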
	
\begin{proof}
	For \ned\ the numerator is the weight of an edit path, which is always smaller than the denominator which is the length of the edit path, thus $\ned(w_1,w_2)\leq 1$ for all $w_1,w_2\in\Sigma^*$. Since $\ned(\varepsilon,a)=1$ the upper bound is $1$.
	
	For \ged\ the numerator is twice the weight of the edit path, and the denominator is once the weight of the edit path, plus the sum of length of the strings which is at least the size of the edit path, thus clearly at least the weight of the edit path. This shows \ged\ cannot exceed $1$. The fact that $\ged(\varepsilon,a)=1$ shows that $1$ is the upper bound.
	
	To see why \ced\ is not bounded consider the sequence of words $\{a^i\}_{i\in\mathbb{N}}$. That is, the sequence $\varepsilon,a,aa,aaa,\ldots$.
	We have that $\ced(\varepsilon,a^i)=1+\frac{1}{2}+\frac{1}{3}+\ldots+\frac{1}{i}$. 
	Thus $\ced(\varepsilon,a^i)$ is the sum of the Harmonic sequence up to the $i$th element, and since the Harmonic sequence diverges, $\ced$ is unbounded. 
\end{proof}

Towards the second property of metrics that we consider, recall that the first requirements of a metric, \emph{identity of indiscernibles}, is that $d(s_1,s_2)=0$ if and only if $s_1=s_2$.
That is, the distance between two strings (in our case) is zero if and only if it is the exact same string.
In the case of strings, when working with a normalized distance with an upper bound $1$, 
we expect the distance to be $1$,
the maximal possible, if the strings are completely different, namely they do not have any letter in common,
that is, for all $\sigma\in\Sigma$ if $\sigma$ appears in $s_1$ it does not appear in $s_2$ and vice versa. In software verification, for example, this means that the system produced a run that is completely unrelated to the specification, thus we expect the distance to be $1$, indicating it is as far away as possible from the specification.

Since $\ced$ is unbounded, we consider  for the purpose of the next property, a slightly different version, that we call \cedprime, defined as $\cedprime(s_1,s_2)=\min(1,\ced(s_1,s_2))$.
\footnote{This is inspired by \cite{CEDwebsite} 
 that explains this choice as follows:
\theysaid{This measure is not normalized to a particular range. Indeed, for a string of infinite length and a string of 0 length, the contextual normalized edit distance 
would be infinity. But so long as the relative difference in string lengths is not too 
great, the distance will generally remain below 1.0}.}

\begin{property}[max variance of antitheticals]
  Let $d\colon \Sigma^*\times\Sigma^* \rightarrow [0,1]$ be an edit distance function.
  We say that $d$ has the property of \emph{max variance of antitheticals} if
  $d(s_1,s_2)=1$ if and only if $s_1$ and $s_2$ have no letter in common. 
\end{property}
We show that $\ned$ has this property  while $\ged$ and \cedprime do not.\footnote{Note that extending this property to require that $d(s_1,s_2)$ equals the maximal value (be it $1$ or more) only for antitheticals, so
that it can be applied to the original \ced, would not make \ced satisfy it since $\ced(\varepsilon,a) = 1 < \infty$.}
\begin{Claim}
 The property of {max variance of antitheticals} holds for \ned, but does not hold for \ged\ and \cedprime. 
\end{Claim}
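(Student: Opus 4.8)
The plan is to treat the three distances in turn: for \ned\ I will prove both directions of the equivalence, while for \ged\ and \cedprime\ I will exhibit a pair of strings witnessing the failure.

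For the forward direction for \ned, suppose $s_1$ and $s_2$ have no letter in common (and are not both empty). The crucial observation is that \emph{no} edit path $p$ with $\apply(p,s_1)=s_2$ can use a no-change letter $\esd_\sigma$: by the definition of $\apply$, such a letter consumes an occurrence of $\sigma$ from the source and copies that $\sigma$ into the destination, so $\sigma$ would occur both in $s_1$ and in $s_2$, contradicting disjointness. Consequently every letter of every such $p$ has weight $1$, hence $\weight(p)=\len(p)$ and $\cost(p)=1$; since at least one such path exists (delete all of $s_1$, then insert all of $s_2$) and, as shown in the preceding claim, $\cost$ of any edit path is at most $1$, we get $\ned(s_1,s_2)=1$. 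For the converse I argue the contrapositive: if some $\sigma$ occurs in $s_1$, say as its $i$-th letter, and in $s_2$, say as its $j$-th letter, I take the edit path that first deletes the prefix of $s_1$ preceding this occurrence, then inserts the prefix of $s_2$ preceding its occurrence, then applies $\esd_\sigma$, and finally deletes the remaining suffix of $s_1$ and inserts the remaining suffix of $s_2$. Unwinding $\apply$ shows this is an edit path from $s_1$ to $s_2$; it has weight $|s_1|+|s_2|-2$ and length $|s_1|+|s_2|-1$, so $\ned(s_1,s_2)\le\frac{|s_1|+|s_2|-2}{|s_1|+|s_2|-1}<1$. The two directions together give the property for \ned.

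For \ged\ it suffices to take $s_1=a$, $s_2=b$, which share no letter: since $\ed(a,b)=1$, we have $\ged(a,b)=\frac{2\cdot1}{1+1+1}=\frac23\neq1$, so the property fails. (The reason is structural: $\ged(s_1,s_2)=1$ forces $\ed(s_1,s_2)=|s_1|+|s_2|$, which, since always $\ed(s_1,s_2)\le\max(|s_1|,|s_2|)$, can only happen when one of the strings is empty.) For \cedprime\ it is cleanest to falsify the ``only if'' direction. Take $s_1=a$ and $s_2=a^n$, which obviously share the letter $a$. The proof of the preceding claim already established $\ced(\varepsilon,a^n)=1+\frac12+\cdots+\frac1n$, and since \ced\ is a metric the triangle inequality yields $\ced(a,a^n)\ge\ced(\varepsilon,a^n)-\ced(\varepsilon,a)=\frac12+\frac13+\cdots+\frac1n$, which exceeds $1$ already for $n=5$. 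Hence $\cedprime(a,a^5)=\min(1,\ced(a,a^5))=1$, even though $a$ and $a^5$ are very far from being antithetical.

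The single delicate point is the forward direction for \ned: one must be sure that disjointness of the letter sets excludes $\esd$ from \emph{every} edit path, not merely from an optimal one, so that genuinely all paths have cost $1$ and hence the minimum is $1$; everything else is a matter of choosing the right witnesses and a short arithmetic verification. (The degenerate pair $s_1=s_2=\varepsilon$ lies outside the intended scope of the property, where $\ned=0$ because the strings are equal.)
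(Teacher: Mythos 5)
Your proof is correct and follows essentially the same route as the paper: explicit counterexample pairs for \ged\ and \cedprime, and for \ned\ the observation that cost $1$ is equivalent to the absence of the only zero-weight edit letter $\esd$, whose usability is tied to the strings sharing a letter. You are in fact somewhat more careful than the paper on the converse direction for \ned, where the paper asserts without an explicit construction that a common letter yields an edit path of cost strictly below $1$.
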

\begin{proof}
    Consider $aa$ and $bb$. Since they have no common letter, we expect their distance to be $1$.
    The  fact that $\ged(aa,bb)={2}/{3}$ shows that \ged\ violates the property of {max variance of antitheticals}.\footnote{We note that, moreover, $\ged(aab,b)$ is also ${2}/{3}$ though we expect $\ged(aab,b)<\ged(aa,bb)$ since the average number of edits is smaller in the first case.}
    Consider $a$ and $aaaa$. Since they do have a common letter, we expect their distance to be strictly less than $1$.
    The fact that $\ced'(a,aaaa)=1$ shows that \cedprime\ violates the property of {max variance of antitheticals}.
    
    To see that \ned\ has this property, note that it results in a value of $1$ iff the numerator equals the denominator,
    i.e., the weight of the edit path is the same as its length; which holds iff there are no edit letters with weight zero.
    Since the only zero weight edit letter is no-change, $\esd$, the value of \ned\ is $1$ if and only if the words have no common letter.
\end{proof}

For the third metric comparison property, consider two words $u$ and $v$ and suppose $d(u,v)=c$ for the concerned edit distance function $d$.
When considering normalized edit distance, we expect that $d(u^i,v^i)$ will not exceed $c$ since by repeating $i$ times the  edit operations for transforming $u$ into $v$ we should be able to transform $u^i$ into $v^i$ and the `average' number of edits will not change. It could be that when considering the longer words $u^i$ and $v^i$ there is a  better sequence of edits, thus we do not expect equality. As before, our motivation for requiring this property comes from software verification. Specifically, when considering {periodic runs}, generated, e.g., by code with loops, one would expect that the distance between the periodic runs is not larger than the distance between the periods because an error that repeats regularly should only be counted once in a normalized measure that models average error rate.

\begin{property}[Non escalation of repetitions]
    Let $d$
    be an edit distance function.
    Let $u,v\in\Sigma^*$.
    If $d(u^k,v^k)\leq d(u,v)$ for any $k>1$ we say that $d$ \emph{does not escalate repetitions}. 
\end{property}
\begin{Claim}
    The \ned\ and \ged\ distances satisfy the property of non escalation of repetitions. The \ced\ and \cedprime\ distances do not.
\end{Claim}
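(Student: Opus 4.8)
The plan is to prove the three claims separately, treating \ned\ and \ged\ together where possible, and then exhibiting counterexamples for \ced\ and \cedprime. I will phrase everything through the alignment view and the fact, established earlier, that the weight of an edit path equals the Hamming distance of the corresponding padded words while the length equals the common length of those words.

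\textbf{Non escalation for \ned.}  First I would take an optimal edit path $p$ witnessing $\ned(u,v)=c$, say with $\weight(p)=w$ and $\len(p)=\ell$, so $c = w/\ell$.  The key step is to observe that the concatenation $p^k$ (the edit path that applies $p$, then $p$ again, and so on $k$ times) is a valid edit path from $u^k$ to $v^k$, because $\apply$ processes the source left to right and $\apply(p,u)=v$ guarantees that after consuming the first $|u|$ letters of $u^k$ the path $p$ has produced exactly $v$ and left the remaining $u^{k-1}$ untouched; an easy induction on $k$ then gives $\apply(p^k, u^k) = v^k$.  Since both $\weight$ and $\len$ are additive over concatenation, $\weight(p^k) = kw$ and $\len(p^k) = k\ell$, hence $\cost(p^k) = kw/(k\ell) = w/\ell = c$.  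Therefore $\ned(u^k,v^k) \le \cost(p^k) = c = \ned(u,v)$, which is exactly the required inequality.  (One should dispatch the degenerate case $u=v=\varepsilon$, where everything is $0$, and note the definition only asks for $k>1$.)

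\textbf{Non escalation for \ged.}  Here the plan is to reduce to a monotonicity statement about the single-variable function $f(t) = \frac{2t}{L+t}$, where I would set $L = |u^k|+|v^k| = k(|u|+|v|)$ and use that $\ed(u^k,v^k) \le k\cdot\ed(u,v)$ — this sub-multiplicativity of the ordinary edit distance follows by the same concatenation-of-paths argument as above, now tracking only $\weight$.  Writing $e = \ed(u,v)$ and $e' = \ed(u^k,v^k) \le ke$, I have $\ged(u^k,v^k) = f(e')$ with denominator base $L = k(|u|+|v|)$, while $\ged(u,v) = \frac{2e}{(|u|+|v|)+e}$.  Since $t \mapsto \frac{2t}{L+t}$ is increasing in $t$ for fixed $L>0$, $\ged(u^k,v^k) \le \frac{2ke}{k(|u|+|v|)+ke} = \frac{2e}{(|u|+|v|)+e} = \ged(u,v)$, where the last equality is just cancelling $k$.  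The case $|u|+|v|=0$ (both empty) is again trivial since all values are $0$.

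\textbf{Failure for \ced\ and \cedprime.}  For the negative part I would produce a concrete pair $u,v$ with a short $\ced$-distance whose square has strictly larger \ced, and arrange the numbers so that the values stay below $1$, so that the same example refutes \cedprime\ as well.  A natural candidate is $u = \varepsilon$, $v = a$ (or, if one prefers nonempty words, something like $u=a$, $v=aa$), for which $\ced(u,v)$ is a small harmonic-type sum computed as in the proof of the boundedness claim, while $\ced(u^2,v^2)$ forces a longer chain of unit-edit steps and hence a strictly larger harmonic sum; the explicit values of $\ced(\varepsilon,a^i) = \sum_{j=1}^{i} 1/j$ recorded earlier make this a short computation, and choosing $i$ small keeps every relevant value under $1$ so that \cedprime\ coincides with \ced\ on the example.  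The main obstacle is purely bookkeeping: one must check that the exhibited chain for $\ced(u^k,v^k)$ is actually \emph{optimal} (lower bound), not merely \emph{feasible}, since \ced\ is defined as a minimum over chains; for the $\varepsilon$ versus $a^i$ family this is immediate because any chain from $\varepsilon$ to $a^i$ must pass through strings of every length $0,1,\dots,i$, pinning the sum from below, and the analogous pigeonhole argument handles the squared words.
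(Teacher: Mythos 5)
Your arguments for \ned\ and \ged\ are correct and essentially the same as the paper's: the paper likewise concatenates an optimal path $p_{uv}$ with itself $k$ times to obtain $\ned(u^k,v^k)\le \frac{k\cdot\weight(p_{uv})}{k\cdot\len(p_{uv})}=\ned(u,v)$, and for \ged\ it writes $\ged(u^k,v^k)\le \frac{2k\cdot\ed(u,v)}{k(|u|+|v|)+k\cdot\ed(u,v)}=\ged(u,v)$, which implicitly uses exactly the two facts you make explicit ($\ed(u^k,v^k)\le k\cdot\ed(u,v)$ and monotonicity of $t\mapsto 2t/(L+t)$).

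The one genuine problem is your \emph{primary} counterexample for \cedprime. With $u=\varepsilon$, $v=a$ you have $\ced(\varepsilon,a)=1$, so $\cedprime(\varepsilon,a)=\min(1,1)=1$ and $\cedprime(\varepsilon,aa)=\min(1,\tfrac{3}{2})=1$: the truncation at $1$ swallows the escalation, and the example proves nothing about \cedprime\ (it does still work for \ced). Your remark that ``choosing $i$ small keeps every relevant value under $1$'' is false for this pair, since the base value is already exactly $1$. Your parenthetical fallback $u=a$, $v=aa$ does work and should be promoted to the main example: $\ced(a,aa)=\tfrac{1}{2}$ while $\ced(aa,aaaa)=\tfrac{1}{3}+\tfrac{1}{4}=\tfrac{7}{12}>\tfrac{1}{2}$, with both values below $1$ so \cedprime\ coincides with \ced\ there. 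The paper instead uses $u=aab$, $v=aaab$ (values $\tfrac14$, $\tfrac{15}{56}$, $\tfrac{181}{660}$ for $k=1,2,3$) and, unlike you, does not justify that the exhibited chains are \ced-optimal; your length-walk/pigeonhole lower bound is a welcome addition that the paper leaves implicit.
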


\begin{proof}
Consider $u=aab$ and $v=aaab$. The following shows that \ced\ and \cedprime\ escalate repetitions.
\vspace{-1mm}
$$\begin{array}{l@{\qquad}l}
\ced((aab)^1,(aaab)^1)=\frac{1}{4}=0.25 \\
\ced((aab)^2,(aaab)^2)=\frac{1}{7}+\frac{1}{8}=\frac{15}{56}=0.2678 \\ 
\ced((aab)^3,(aaab)^3)=\frac{1}{10}+\frac{1}{11}+\frac{1}{12}=\frac{181}{660}=0.2742 
\end{array}$$

To see that \ned\ does not escalate repetitions, 
assume $p_{uv}$ is an optimal edit path transforming $u$ to $v$. Since $(p_{uv})^k$, the edit path obtained by repeating $k$ times $p_{uv}$,
is an edit path transforming $u^k$ to $v^k$:
$$\begin{array}{l}
    \ned(u^k,v^k)\leq \frac{k \cdot \weight(p_{uv})}{k \cdot \len(p_{uv})}=\frac{\weight(p_{uv})}{\len(p_{uv})}=\ned(u,v).
\end{array}$$
The same reasoning shows that \ged\ does not escalate repetitions. 
$$\begin{array}{l}
    \ged(u^k,v^k)\leq \frac{2k\cdot\ed(u,v)}{k(|u|+|v|)+k\cdot\ed(u,v)}=\frac{2\cdot\ed(u,v)}{|u|+|v|+\ed(u,v)}=\ged(u,v).
\end{array}\phantom{----------}\qedhere $$ 
\end{proof}

The last property we consider is referred to as \emph{pure uniformity of operations}.
While we assume the weights of delete, insert and substitution are uniform,
the resulting edit distance function may not be purely uniform, in the following 
sense. Consider two strings $s_1$ and $s_2$ such that $s_1$ is shorter than
$s_2$. Then to transform $s_1$ to $s_2$ we would need some insertion operations.
Consider now a word $s'_1$ that is longer than $s_1$ but not longer than $s_2$
and is obtained by padding $s_1$ with some new letter $\blanknew$ in some
arbitrary set of positions. Since insert and substitution weigh the same,
we expect $d(s_1,s_2)$ to be equal to $d(s'_1,s_2)$.

To define this formally we use the following notations.
Let $\Sigma'\subseteq\Sigma$ and $s\in\Sigma^*$
we use $\pi_{\Sigma'}(s)$ for the string obtained
from $s$ by leaving only letters in $\Sigma'$.
For instance, if $\Sigma=\{a,b,c\}$ and
$s=abcbacc$ then $\pi_{\{a,b\}}=abba$.

\begin{property}[pure uniformity]\label{prop:pure-uniformity}
\textup{
Let $\Sigma,\Sigma_1,\Sigma_2$ be disjoints alphabets, and
let $s_1,s_2\in\Sigma^*$.
We call $d$ \emph{purely uniform} if 
 $d(s_1,s_2)= \min \left\{ d(s'_1, s'_2) ~\left|~ 
                    s'_i\in(\Sigma\uplus\Sigma_i)^*  \text{ and }
                    \pi_\Sigma(s'_i){=} s_i 
                    \text{ for } i\in\{1,2\}
                 \right.\right\}.$
}
\end{property}

We can now show that \ned\ satisfies this property while \ged\ and \ced\ do not.

\begin{Claim}
The \ned\ distance is purely uniform. The \ged\ and \ced\ distances are not.
\end{Claim}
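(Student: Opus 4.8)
The plan is to treat the two assertions separately. For the positive part I would show that padding with fresh letters can only increase (never decrease) \ned, so the minimum in the definition of pure uniformity is attained at the trivial padding $s'_i=s_i$; the negative part reduces to two small explicit witnesses.

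\emph{\ned is purely uniform.} One inequality is for free: since $\Sigma^*\subseteq(\Sigma\uplus\Sigma_i)^*$ and $\pi_\Sigma(s_i)=s_i$, the pair $(s_1,s_2)$ itself competes in the minimum, so $\min\{\cdots\}\le\ned(s_1,s_2)$. For the reverse I would fix admissible paddings $s'_1,s'_2$ and an optimal edit path $p'$ from $s'_1$ to $s'_2$, and read $p'$ through the alignment view as a column-by-column pairing of two equal-length words obtained from $s'_1$ and $s'_2$ by inserting blanks. The key structural remark I would use is that, because $\Sigma,\Sigma_1,\Sigma_2$ are pairwise disjoint, every column containing a $\Sigma_1$- or a $\Sigma_2$-letter has weight $1$: a $\Sigma_1$-letter on the source side can never be matched to an equal letter on the target side (which has no $\Sigma_1$-letters), so it is a change or a delete, and symmetrically for $\Sigma_2$. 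From this I would build an edit path $p$ from $s_1$ to $s_2$ by scanning the columns of $p'$ left to right: keep a column whose two entries are both in $\Sigma$ (a no-change or a change); turn a column pairing a $\Sigma$-letter with a $\Sigma_2$-letter or a blank into a deletion of that $\Sigma$-letter; turn a column pairing a $\Sigma_1$-letter or a blank with a $\Sigma$-letter into an insertion of that $\Sigma$-letter; and drop outright any column whose entries lie entirely outside $\Sigma$. Since $\pi_\Sigma(s'_i)=s_i$ and order is preserved, reading off the $\Sigma$-entries on each side shows $p$ is a genuine edit path from $s_1$ to $s_2$.

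It then remains to compare costs, and this is the step I expect to carry the real content. Every kept or relabelled column contributes the same weight and length to $p$ as its original did to $p'$ — here one must be careful that a $\Sigma$-vs-$\Sigma_2$ change (weight $1$) becomes a deletion (weight $1$) and a $\Sigma_1$-vs-$\Sigma$ change becomes an insertion (weight $1$), so the relabelling gains nothing. Only the dropped columns change the totals, each removing exactly one unit of weight and one of length. Writing $w=\weight(p')$, $\ell=\len(p')$ and $k$ for the number of dropped columns, we get $\weight(p)=w-k$ and $\len(p)=\ell-k$; since $\cost(p')=w/\ell\le1$, i.e.\ $w\le\ell$, the map $k\mapsto\frac{w-k}{\ell-k}$ is non-increasing, so $\cost(p)=\frac{w-k}{\ell-k}\le\frac{w}{\ell}=\ned(s'_1,s'_2)$ (the degenerate case $\ell-k=0$ forces $s_1=s_2=\varepsilon$, hence $\ned(s_1,s_2)=0$). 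Therefore $\ned(s_1,s_2)\le\cost(p)\le\ned(s'_1,s'_2)$, and together with the free direction this gives the claimed equality. The only subtlety is the weight bookkeeping together with the elementary monotonicity of $\frac{w-k}{\ell-k}$; everything else is routine.

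\emph{\ged and \ced are not.} For \ged\ I would take $\Sigma=\{a\}$ and a fresh letter $x$: then $\ged(\varepsilon,a)=\frac{2}{0+1+1}=1$, whereas the padding $s'_1=x$, $s'_2=a$ gives $\ged(x,a)=\frac{2}{1+1+1}=\frac23<1$, so the minimum over paddings is strictly below $\ged(\varepsilon,a)$. For \ced, recall that $\ced(\varepsilon,a^k)=1+\frac12+\cdots+\frac1k$ (the cheapest sequence builds the word one letter at a time), while padding $\varepsilon$ to $x^k$ gives $\ced(x^k,a^k)\le1$ via the sequence that changes the fresh letters into $a$'s one at a time without ever altering the length ($k$ steps, each of cost $\frac1k$); already $k=2$ yields $\ced(xx,aa)\le1<\frac32=\ced(\varepsilon,aa)$, and the gap grows without bound in $k$. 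The same witnesses settle \cedprime\ as well, since every quantity involved is at most $1$.
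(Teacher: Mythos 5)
Your proof is correct and follows essentially the same route as the paper's: the trivial $\geq$ direction, plus a projection of an optimal path for the padded words onto one for the originals that keeps the all-$\Sigma$ columns, relabels mixed $\Sigma$-versus-new columns as inserts or deletes of the same weight, and drops the all-new columns (each of weight and length $1$), so that the cost can only decrease by the monotonicity of $(w-k)/(\ell-k)$ when $w\le\ell$ --- exactly the content of the paper's \autoref{claim:alignment} and \autoref{plus1} --- while the counterexamples for \ged\ and \ced\ differ only in the choice of witnesses. The one inaccuracy is the closing aside about \cedprime: with your witness one has $\cedprime(\varepsilon,aa)=\min(1,\tfrac32)=1$ but also $\ced(xx,aa)=1$ (two substitutions at cost $\tfrac12$ each is optimal), so there is no strict decrease and your witness does not settle \cedprime; since \cedprime\ is not part of the stated claim this does not affect correctness, but the remark should be dropped or replaced by a witness whose values all stay strictly below $1$, as in the paper's $a^{50}$ versus $a^{100}$ example.
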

\begin{proof}
To see why \ged and \ced are not purely uniform consider the words $s_1=a^{50}$, $s_2=a^{100}$
and $s'_1=a^{50}c^{50}$ and note that $\pi_{\{a,b\}}(s'_1)=s_1$.
We have that $\ged(a^{50},a^{100}) = 2\cdot 50/(150+50) = 1/2$ whereas $\ged(a^{50}c^{50}    ,a^{100}) = 100/(200+100) =1/3$.
Considering \ced, we have that $\ced(a^{50},a^{100}) = \sum_{i=51}^{100} \frac{1}{i} \approx 0.68817$ whereas $\ced(a^{50}c^{50}    ,a^{100}) = \sum_{i=51}^{100} \frac{1}{100} = 0.5$.
Since all
values are below $1$, the
same is true for $\cedprime$.

To show that \ned\ is purely uniform we first note that 
$s_1,s_2\in \Sigma^*$ implies $s_1,s_2$ are in $(\Sigma\uplus\Sigma_1)^*$ and $(\Sigma\uplus\Sigma_2)^*$, respectively, 
thus the $\geq$ direction of the equality in \autoref{prop:pure-uniformity} clearly holds. 
For the $\leq$ direction, we turn to \autoref{claim:alignment} below,
which essentially formalized the intuition provided regarding the \emph{alignment view} of \ned.
Thus, given $s'_1$ and $s'_2$ establishing the min in the RHS of \autoref{prop:pure-uniformity}, and  $p'\in\Gamma^*$ an edit path transforming $s'_1$ into $s'_2$,
we can build an edit path $p\in\Gamma^*$ transforming $\pi_\Sigma(s'_1)$ into $\pi_\Sigma(s'_2)$ such that $\cost(p)\leq\cost(p')$.
This shows that $\ned(s_1,s_2)\leq \ned(s'_1,s'_2)$ for every such $s'_1,s'_2$.
Thus \ned\ satisfies the pure uniformity property.
\end{proof}

\begin{Claim}\label{claim:alignment} 
Let $\Sigma,\Sigma_1,\Sigma_2$ be disjoints nonempty alphabets. 
	Let $s'_1\in\Sigma\uplus\Sigma_1$  and $s'_2\in\Sigma\uplus \Sigma_2$
	and $p'$ an edit path transforming $s'_1$ to $s'_2$. There exists an edit path $p$ transforming 
	$\pi_\Sigma(s'_1)$ to $\pi_\Sigma(s'_2)$ 
	such that $\cost(p) \leq \cost(p')$.
\end{Claim}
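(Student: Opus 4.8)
The plan is to turn the given path $p'$ into a path $p$ by a local, position-by-position rewriting that performs exactly the same work on the $\Sigma$-letters while discarding all the work $p'$ does on the auxiliary letters of $\Sigma_1$ and $\Sigma_2$, and then to check two things separately: (i) $p$ is a legal edit path from $\pi_\Sigma(s'_1)$ to $\pi_\Sigma(s'_2)$, and (ii) $\cost(p)\le\cost(p')$. The first step is to record the shape of the positions of $p'$ in the alignment view: since $\Sigma,\Sigma_1,\Sigma_2$ are pairwise disjoint, $s'_1\in(\Sigma\uplus\Sigma_1)^*$ and $s'_2\in(\Sigma\uplus\Sigma_2)^*$, an $\esd_\sigma$ position must have $\sigma\in\Sigma$, and an $\esc_{(\sigma_1,\sigma_2)}$ position must read $\sigma_1\in\Sigma\uplus\Sigma_1$ and write $\sigma_2\in\Sigma\uplus\Sigma_2$; in particular a letter of $\Sigma_1$ is never copied and a letter of $\Sigma_2$ is never read.

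Next I would define the rewriting of each position of $p'$. Positions $\esd_\sigma$, positions $\esv_\sigma$ or $\esx_\sigma$ with $\sigma\in\Sigma$, and positions $\esc_{(\sigma_1,\sigma_2)}$ with $\sigma_1,\sigma_2\in\Sigma$, are kept verbatim. A position $\esc_{(\sigma_1,\sigma_2)}$ with $\sigma_1\in\Sigma$ and $\sigma_2\in\Sigma_2$ is replaced by $\esx_{\sigma_1}$ (the written letter is projected away, but the read letter survives), and symmetrically $\esc_{(\sigma_1,\sigma_2)}$ with $\sigma_1\in\Sigma_1$ and $\sigma_2\in\Sigma$ is replaced by $\esv_{\sigma_2}$. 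Every remaining position, namely $\esv_\sigma$ with $\sigma\in\Sigma_2$, $\esx_\sigma$ with $\sigma\in\Sigma_1$, or $\esc_{(\sigma_1,\sigma_2)}$ with $\sigma_1\in\Sigma_1$ and $\sigma_2\in\Sigma_2$, is deleted. Let $p$ be the concatenation, in order, of what survives. A routine induction on $|p'|$ that peels off $p'[1]$ and unfolds the recursive definition of $\apply$ shows $\apply(p,\pi_\Sigma(s'_1))=\pi_\Sigma(s'_2)$: in each deleted case the relevant prefixes of $\pi_\Sigma(s'_1)$ and $\pi_\Sigma(s'_2)$ are unaffected, and in each kept/replaced case they advance consistently.

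For the cost bound, the key bookkeeping observation is that every kept or replaced position has the same weight and the same length ($=1$) as the position of $p'$ it came from — in particular replacing a change straddling $\Sigma$ and an auxiliary alphabet by a single delete or insert keeps the weight equal to $1$ — whereas every deleted position had weight $1$ and length $1$. Hence, if $r\ge 0$ positions are deleted, then $\weight(p)=\weight(p')-r$ and $\len(p)=\len(p')-r$ (this too follows by the same induction, carrying the identities as part of the inductive hypothesis). Since every edit path over $\Gamma_\Sigma$ satisfies $\weight(p')\le\len(p')$ (each letter has length $1$ and weight $0$ or $1$), subtracting the same nonnegative $r$ from numerator and denominator does not increase the ratio, and if it drives the denominator to $0$ then $p=\varepsilon$ and $\cost(p)=0\le\cost(p')$; either way $\cost(p)\le\cost(p')$.

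I expect the main obstacle to be pinning down, inside the induction, that converting a change with exactly one auxiliary endpoint into a single delete (respectively insert) — rather than discarding it or splitting it into two operations — is the correct move, and that doing so preserves both the target string and the weight/length accounting; once the case analysis on $p'[1]$ is written out this is mechanical. The remaining inequality is then just the elementary fact that $\frac{w-r}{\ell-r}\le\frac{w}{\ell}$ whenever $0\le w\le\ell$ and $0\le r\le\ell$, reading $\frac{0}{0}$ as $0$.
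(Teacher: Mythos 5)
Your proposal is correct and matches the paper's proof essentially letter for letter: the paper defines the same position-wise map $f$ on edit letters (keep operations entirely within $\Sigma$, turn a change straddling $\Sigma$ and $\Sigma_2$ into a delete and one straddling $\Sigma_1$ and $\Sigma$ into an insert, erase the rest), and closes with the same observation that all erased letters have weight and length $1$, so the ratio can only decrease by Lemma~\ref{plus1}. No substantive differences.
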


\section{A Proof of the Triangle Inequality}
\label{sec:triang}
This section is the main contribution of the paper --- showing that $\ned$ with uniform costs satisfies the triangle inequality. 

Let $s_1,s_2,s_3\in\Sigma^*$ and $p_{12},p_{23}$ be edit paths, such that $\apply(p_{12},s_1)=s_2$, $\apply(p_{23},s_2)=s_3$. 
We would like to define a method $\Compose \colon \Gamma_\Sigma^*\times\Gamma_\Sigma^*\rightarrow \Gamma_\Sigma^*$
that given the two edit paths
$p_{12},p_{23}$ returns an edit path $p_{13}$ from $s_1$ to $s_3$.
In addition, using the notations $d_{*}=\weight(p_{*})$ and $l_{*}=\len(p_{*})$ for $*\in\{12,23,13\}$,
we would like to show that both of the following hold:
\vspace{-10mm}
\begin{multicols}{3}
\begin{equation}
     d_{13} \le d_{12}+d_{23} \label{d} 
\end{equation}
\begin{equation*}
\end{equation*}
\begin{equation}
	l_{13} \ge \max\{l_{12},l_{23}\} \label{l} 
\end{equation}
\end{multicols}
	
From these two equations we can deduce that the cost of the resulting path $p_{13}$ is at most  the sum of costs  of the given paths $p_{12}$ and $p_{23}$ 
proving that $\ned$ satisfies the triangle inequality.

\paragraph{Introducing a new edit letter}
To do this we need, for technical reasons, to introduce a new edit letter, which we denote $\esb$ (for \emph{blank}). This is actually an
abbreviation of $\esv\esx$, that is, it signifies that a new letter is added and immediately deleted.
We enhance the weight and length definition from $\Gamma$ to $\Gamma\cup\{\esb\}$ as follows.
\[
\begin{array}{l@{\qquad\qquad}r}
    \weight(\gamma)=\begin{cases} 
    0 & \text{if } \gamma = \esd\\
    1 &  \text{if } \gamma \in\{\esc, \esv, \esx \}\\
    2 & \text{if } \gamma = \esb\\
    \end{cases}
    &
    \len(\gamma)=\begin{cases} 
    1 & \text{if } \gamma \in\{\esd, \esc, \esv, \esx \}\\
    2 & \text{if } \gamma = \esb\\
    \end{cases}
\end{array}
\]

As before we use the natural extensions of $\weight$ and $\len$ from letters to strings and define $\cost(p)$ to be $\weight(p)/\len(p)$.

\paragraph{The compose method}
We define a helper function $\compose$  that
produces a string over $(\Gamma_\Sigma \cup \{\esb\})^*$
(rather than over $\Gamma_\Sigma^*)$. Given such a sequence we can convert it into a sequence over $\Gamma_\Sigma$
by deleting all $\esb$ symbols.
The method 
$\compose\colon \Gamma_\Sigma^*\times\Gamma_\Sigma^*\rightarrow (\Gamma_\Sigma \cup \{\esb\})^* \cup \{\bot\}$
is defined
inductively, in \autoref{def:compose}, by scanning the letters of the given edit paths $p_{12},p_{23}$.
We say that $\compose$ is well defined if it does not return $\bot$.
We show that, when applied on edit paths $p_{12}$ and $p_{23}$ transforming some $s_1$ into $s_2$ and $s_2$ into $s_3$,
respectively, $\compose$ is well defined.

\begin{definition}\label{def:compose}
Let $p_{12},p_{23}$ be edit paths over $\Gamma_\Sigma$. We define $\compose(p_{12}, p_{23})$
inductively as follows.\\
\scalebox{.9}{
\textup{
\[
\compose(p_{12}, p_{23})=
            \left\{
            \begin{array}{lll}
			\varepsilon  & \text{if } p_{12}=p_{23}=\varepsilon & (0) \\
			\esx_\sigma \cdot \compose(p_{12}[2..],p_{23})  & \text{if } p_{12}[1]=\esx_\sigma & (1) \\
			\esv_\sigma\cdot \compose(p_{12},p_{23}[2..])  & \text{if } p_{23}[1]=\esv_\sigma & (2) \\
			\esd_\sigma\cdot \compose(p_{12}[2..],p_{23}[2..])  & \text{if } (p_{12}[1],p_{23}[1]) = (\esd_\sigma,\esd_\sigma) & (3)\\
			\esc_{(\sigma',\sigma)}\cdot \compose(p_{12}[2..],p_{23}[2..])  & \text{if } (p_{12}[1],p_{23}[1]) = (\esd_{\sigma'
			},\esc_{(\sigma',\sigma)}) & (4)\\
			\esx_\sigma\cdot \compose(p_{12}[2..],p_{23}[2..])  & \text{if } (p_{12}[1],p_{23}[1]) = (\esd_\sigma,\esx_\sigma) & (5) \\
			\esc_{(\sigma_1,\sigma_3)}\cdot \compose(p_{12}[2..],p_{23}[2..])  & \text{if } (p_{12}[1],p_{23}[1]) = (\esc_{(\sigma_1,\sigma_2)},\esc_{(\sigma_2, \sigma_3)}) & (6)\\
			\esx_{\sigma_1}\cdot \compose(p_{12}[2..],p_{23}[2..])  & \text{if } (p_{12}[1],p_{23}[1]) = (\esc_{(\sigma_1,\sigma_2)},\esx_{\sigma_2}) & (7) \\
			\esc_{(\sigma',\sigma)}\cdot \compose(p_{12}[2..],p_{23}[2..])  & \text{if } (p_{12}[1],p_{23}[1]) = (\esc_{(\sigma',\sigma)},\esd_\sigma) & (8)\\
			\esv_\sigma\cdot \compose(p_{12}[2..],p_{23}[2..]) & \text{if } (p_{12}[1],p_{23}[1]) = (\esv_\sigma,\esd_\sigma)  & (9)\\
			\esv_{\sigma_2}\cdot \compose(p_{12}[2..],p_{23}[2..])  & \text{if } (p_{12}[1],p_{23}[1]) = (\esv_{\sigma_1},\esc_{(\sigma_1,\sigma_2)}) & (10)\\
			\esb\cdot \compose(p_{12}[2..],p_{23}[2..])  & \text{if } (p_{12}[1],p_{23}[1]) = (\esv_\sigma,\esx_\sigma)
			& (11)\\
			\bot & \text{otherwise} & (12)
		 \end{array}
		 \right.
\]
}}
\end{definition}

\begin{figure}[t]
\noindent\makebox{
\scalebox{.8}
{
\begin{subfigure}{.5\textwidth}

  \scalebox{.7}{\begin{tikzpicture}
\draw[step=1cm, color=gray] (0, 0) grid (6,4);
\draw[->,line width=3pt, rounded corners] (0, 4) -- (1, 3) -- (2, 3) -- (3, 2) -- (4, 2) -- (5, 1) -- (6, 0);


\node[circle,fill=red!20, inner sep=0pt, minimum size=15pt] (n1) at (0.5,3.5) {2};
\node[circle,fill=red!20, inner sep=0pt, minimum size=15pt] (n2) at (1.5,3) {3};
\node[circle,fill=red!20, inner sep=0pt, minimum size=15pt] (n3) at (2.5,2.5) {4};
\node[circle,fill=red!20, inner sep=0pt, minimum size=15pt] (n4) at (3.5,2) {-};
\node[circle,fill=red!20, inner sep=0pt, minimum size=15pt] (n5) at (4.5,1.5) {5};
\node[circle,fill=red!20, inner sep=0pt, minimum size=15pt] (n6) at (5.5,0.5) {6};

\draw[] (0.5,3.5)++(-90:.3) node[below]{\\ \textcolor{blue}{\esc}};
\draw[] (1.5,3)++(-90:.3) node[below]{\\ \textcolor{blue}{\esv}};
\draw[] (2.5,2.5)++(-90:.3) node[below]{\\ \textcolor{blue}{\esd}};
\draw[] (3.5,2)++(-90:.3) node[below]{\\ \textcolor{blue}{\esv}};
\draw[] (4.5,1.5)++(-90:.3) node[below]{\\ \textcolor{blue}{\esd}};
\draw[] (5.5,0.5)++(-90:.3) node[below]{\\ \textcolor{blue}{\esd}};

\foreach \coord/\label [count=\xi] in {{0,1}/{$b$},
{0,2}/{$a$},
{0,3}/{$b$},
{0,4}/{$a$}}{
    \node[anchor=north east] at (\coord) {\label};
}
\foreach \coord/\label [count=\xi] in {{0,4}/{$b$},
{1,4}/{$c$},
{2,4}/{$b$},
{3,4}/{$b$},
{4,4}/{$a$},
{5,4}/{$b$}}{
    \node[anchor=south west] at (\coord) {\label};
}
\end{tikzpicture}
  }
  \caption{An optimal edit path $p_{12}$ for $w_1,w_2$}
  \label{fig:p12}
\end{subfigure}
}}
\noindent\makebox{
\scalebox{.8}
{
\begin{subfigure}{.7\textwidth}

  \scalebox{.7}{\begin{tikzpicture}
\draw[step=1cm, color=gray] (0, 0) grid (6,4);

\draw[->,line width=3pt, rounded corners] (0, 4) -- (1, 4) -- (2, 3) -- (3, 3) -- (4, 2) -- (5, 1) -- (6, 0);


\node[circle,fill=red!20, inner sep=0pt, minimum size=15pt] (n1) at (0.5,4) {1};
\node[circle,fill=red!20, inner sep=0pt, minimum size=15pt] (n2) at (1.5,3.5) {2};
\node[circle,fill=red!20, inner sep=0pt, minimum size=15pt] (n3) at (2.5,3) {3};
\node[circle,fill=red!20, inner sep=0pt, minimum size=15pt] (n4) at (3.5,2.5) {4};
\node[circle,fill=red!20, inner sep=0pt, minimum size=15pt] (n5) at (4.5,1.5) {5};
\node[circle,fill=red!20, inner sep=0pt, minimum size=15pt] (n6) at (5.5,0.5) {6};

\draw[] (0.5,4)++(-90:.3) node[below]{\\ \textcolor{blue}{\esv}};
\draw[] (1.5,3.5)++(-90:.3) node[below]{\\ \textcolor{blue}{\esc}};
\draw[] (2.5,3)++(-90:.3) node[below]{\\ \textcolor{blue}{\esv}};
\draw[] (3.5,2.5)++(-90:.3) node[below]{\\ \textcolor{blue}{\esd}};
\draw[] (4.5,1.5)++(-90:.3) node[below]{\\ \textcolor{blue}{\esd}};
\draw[] (5.5,0.5)++(-90:.3) node[below]{\\ \textcolor{blue}{\esd}};

\foreach \coord/\label [count=\xi] in {{0,1}/{$b$},
{0,2}/{$a$},
{0,3}/{$b$},
{0,4}/{$a$}}{
    \node[anchor=north east] at (\coord) {\label};
}
\foreach \coord/\label [count=\xi] in {{0,4}/{$a$},
{1,4}/{$b$},
{2,4}/{$a$},
{3,4}/{$b$},
{4,4}/{$a$},
{5,4}/{$b$}}{
    \node[anchor=south west] at (\coord) {\label};
}
\end{tikzpicture}
  }
  \caption{The composed edit path $p_{13}$ using \autoref{def:compose}}
  \label{fig:p13}
\end{subfigure}
}}
\noindent\makebox{
\scalebox{.8}
{
\begin{subfigure}{.5\textwidth}

  \scalebox{.7}{\begin{tikzpicture}
\draw[step=1cm, color=gray] (0, 0) grid (6,6);
\draw[->,line width=3pt, rounded corners] (0, 6) -- (1, 6) -- (2, 5) -- (3, 4) -- (4, 3) -- (4, 2) -- (5, 1) -- (6, 0);


\node[circle,fill=red!20, inner sep=0pt, minimum size=15pt] (n1) at (0.5,6) {1};
\node[circle,fill=red!20, inner sep=0pt, minimum size=15pt] (n2) at (1.5,5.5) {2};
\node[circle,fill=red!20, inner sep=0pt, minimum size=15pt] (n3) at (2.5,4.5) {3};
\node[circle,fill=red!20, inner sep=0pt, minimum size=15pt] (n4) at (3.5,3.5) {4};
\node[circle,fill=red!20, inner sep=0pt, minimum size=15pt] (n5) at (4,2.5) {-};
\node[circle,fill=red!20, inner sep=0pt, minimum size=15pt] (n5) at (4.5,1.5) {5};
\node[circle,fill=red!20, inner sep=0pt, minimum size=15pt] (n6) at (5.5,0.5) {6};

\draw[] (0.5,6)++(-90:.3) node[below]{\\ \textcolor{blue}{\esv}};
\draw[] (1.5,5.5)++(-90:.3) node[below]{\\ \textcolor{blue}{\esd}};
\draw[] (2.5,4.5)++(-90:.3) node[below]{\\ \textcolor{blue}{\esc}};
\draw[] (3.5,3.5)++(-90:.3) node[below]{\\ \textcolor{blue}{\esd}};
\draw[] (3.5,2.2)++(90:0.3) node[below]{\\ \textcolor{blue}{\esx}};
\draw[] (4.5,1.5)++(-90:.3) node[below]{\\ \textcolor{blue}{\esd}};
\draw[] (5.5,0.5)++(-90:.3) node[below]{\\ \textcolor{blue}{\esd}};

\foreach \coord/\label [count=\xi] in {{0,1}/{$b$},
{0,2}/{$a$},
{0,3}/{$b$},
{0,4}/{$b$},
{0,5}/{$c$},
{0,6}/{$b$}}{
    \node[anchor=north east] at (\coord) {\label};
}
\foreach \coord/\label [count=\xi] in {{0,6}/{$a$},
{1,6}/{$b$},
{2,6}/{$a$},
{3,6}/{$b$},
{4,6}/{$a$},
{5,6}/{$b$}}{
    \node[anchor=south west] at (\coord) {\label};
}
\end{tikzpicture}
  }
  \caption{An optimal edit path $p_{23}$ for $w_2,w_3$}
  \label{fig:p23}
\end{subfigure}
}}
\noindent\makebox{
\scalebox{.82}
{
\begin{subfigure}{.5\textwidth}
    \begin{tabular}{ll}
     & $p_{12} = \esc \esv \esd \esv \esd \esd$\\
     & $p_{23} = \esv \esd \esc \esd \esx \esd \esd$\\
     & $\compose(p_{12},p_{23}) =$\\
    & \[
    \begin{array}{llll@{\ \ }l}
        =   & \compose(\esc \esv \esd \esv \esd \esd,\esv \esd \esc \esd \esx \esd \esd) \\
        =_1 & \esv\cdot \compose(\esc \esv \esd \esv \esd \esd, \esd \esc \esd \esx \esd \esd) & \text{case (2)} & \varepsilon & \esv_a \\
        =_2 & \esv \esc\cdot \compose(\esv \esd \esv \esd \esd,\esc \esd \esx \esd \esd) & \text{case (8)} & \esc_{b,a}  &\esd_b  \\
        =_3 & \esv \esc \esv  \cdot \compose(\esd \esv \esd \esd,\esd \esx \esd \esd) & \text{case (10)} & \esv_{c} & \esc_{c,a}  \\ 
        =_4 & \esv \esc \esv \esd \cdot \compose(\esv \esd \esd,\esx \esd \esd) & \text{case (3)} & \esd_{b} & \esd_{b}  \\ 
        =_- & \esv \esc \esv \esd \esb \cdot \compose(\esd \esd,\esd \esd) & \text{case (11)} & \esv_{b} & \esx_{b}  \\ 
        =_5 & \esv \esc \esv \esd \esb \esd\cdot \compose(\esd,\esd)  & \text{case (3)} & \esd_{a} &\esd_{a}  \\ 
        =_6 & \esv \esc \esv \esd \esb \esd \esd & \text{case (3)} & \esd_{b} & \esd_{b}  \\ [1mm]
    \end{array}
    \] \\
    & $p_{13}  = h(\compose(p_{12},p_{23})) = h(\esv \esc \esv \esd \esb \esd \esd) = \esv \esc \esv \esd \esd \esd$
    \end{tabular}
\end{subfigure}
}}
\caption{\scriptsize{Let $w_1 = abab$, $w_2=bcbbbab$, $w_3=ababab$.
\autoref{fig:p12} shows an optimal edit path $p_{12}$ between $w_1$ to $w_2$, \autoref{fig:p23} shows an optimal edit path $p_{23}$ between $w_2$ to $w_3$.
\autoref{fig:p13} shows the edit path $p_{13}$ composed from $p_{12}$ and $p_{23}$ using \autoref{def:compose}.
The edit operations in \autoref{fig:p13} are marked with numbers $1$ to $6$. 
A number $n$ in between $1$ and $6$ in \autoref{fig:p12} and \autoref{fig:p23} signifies that the corresponding
edge contributed to the construction of the edge marked $n$ in \autoref{fig:p13} (thus for the operations 
corresponding to cases (1) and (2) of \autoref{def:compose}, there is one corresponding marking in \autoref{fig:p12} and \autoref{fig:p23}
and for the others there are two). The labels ${-}$ in \autoref{fig:p12} and \autoref{fig:p23} correspond to case (11)
dealing with adding a letter when going from $s_1$ to $s_2$ and deleting it when going from $s_2$ to $s_3$, which yields the edit symbol \esb.
Note that $p_{13}$ is not optimal; still its cost is better than the sum of the costs of $p_{12}$ and $p_{23}$.}}\label{fig:compose}
\end{figure}

We further show that if the resulting string is $p_{13}$ then applying the function $\apply$ to $s_1$ and the edit path
obtained from $p_{13}$ by deleting all $\esb$ results in the string  $s_3$. 
\autoref{fig:compose} shows an example of the application of $\compose$ on two given edit paths.
In the sequel we will further show that the desired equations (\autoref{d}) and (\autoref{l}) hold.

Note that if we reach case (12) then we cannot claim that the result is an edit path.
We thus first show that if $\compose$ is applied to two edit paths $p_{12}, p_{23}$
such that $\apply(p_{12},s_1)=s_2$, and $\apply(p_{23},s_2)=s_3$, then 
the recursive application of $\compose(p_{12}, p_{23})$ will never reach the (12) case. That is,
$\compose(p_{12}, p_{23})$ is well defined.  

\begin{lemma}\label{lemma:compose_well_defined}
Let $s_1,s_2,s_3\in\Sigma^*$ and $p_{12},p_{23}\in\Gamma_\Sigma^*$ be edit paths, such that $\apply(p_{12},s_1)=s_2$ and $\apply(p_{23},s_2)=s_3$. 
	Then $p_{13}=\compose(p_{12},p_{23})$ is well-defined.
\end{lemma}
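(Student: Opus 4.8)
The plan is to isolate an invariant on the recursive calls of $\compose$ and to check that, under this invariant, the catch-all case $(12)$ is never reached. First I would record a convenient reformulation of the hypothesis. For an edit path $q\in\Gamma_\Sigma^*$ write $\mathsf{in}(q)$ for the string over $\Sigma$ obtained by keeping, for each letter of $q$, the source letter it reads (namely $\sigma$ for $\esd_\sigma$, for $\esx_\sigma$, and for $\esc_{(\sigma,\cdot)}$; and nothing for $\esv_\sigma$), and $\mathsf{out}(q)$ for the string obtained by keeping the target letter it writes ($\sigma$ for $\esd_\sigma$ and for $\esv_\sigma$, $\sigma'$ for $\esc_{(\cdot,\sigma')}$, and nothing for $\esx_\sigma$). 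A straightforward induction on $|q|$ against the definition of $\apply$ shows that $\apply(q,w)\in\Sigma^*$ if and only if $\mathsf{in}(q)=w$, and that in this case $\apply(q,w)=\mathsf{out}(q)$. In particular, the hypotheses $\apply(p_{12},s_1)=s_2$ and $\apply(p_{23},s_2)=s_3$ yield $\mathsf{out}(p_{12})=s_2=\mathsf{in}(p_{23})$.

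The core of the argument is the claim that $\compose(q_{12},q_{23})$ is well defined whenever $\mathsf{out}(q_{12})=\mathsf{in}(q_{23})$, which I would prove by strong induction on $|q_{12}|+|q_{23}|$; the lemma then follows by instantiating at $(p_{12},p_{23})$ using the previous paragraph. For the inductive step one shows that exactly one of cases $(0)$--$(11)$ of \autoref{def:compose} applies, and that the resulting recursive call(s) again satisfy the invariant (and have strictly smaller total length, so the inductive hypothesis applies). Dispatch $(0)$ when $q_{12}=q_{23}=\varepsilon$; dispatch $(1)$ when $q_{12}[1]=\esx_\sigma$, noting that deleting this letter changes neither $\mathsf{out}(q_{12})$ nor $\mathsf{in}(q_{23})$; and symmetrically dispatch $(2)$ when $q_{23}[1]=\esv_\sigma$ (and $q_{12}[1]$ is not of the form $\esx_\sigma$). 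In the remaining situation I would first argue that both $q_{12}$ and $q_{23}$ are nonempty: if $q_{12}=\varepsilon$ then $\mathsf{in}(q_{23})=\mathsf{out}(q_{12})=\varepsilon$, yet every admissible leading letter of a nonempty $q_{23}$ that is not an $\esv_\sigma$ reads a letter, forcing $q_{23}=\varepsilon$ as well (hence case $(0)$), and symmetrically on the other side. So $q_{12}[1]\in\{\esd_{\cdot},\esc,\esv_{\cdot}\}$ and $q_{23}[1]\in\{\esd_{\cdot},\esc,\esx_{\cdot}\}$, and the target letter written by $q_{12}[1]$ and the source letter read by $q_{23}[1]$ are both the first letter of the common nonempty string $\mathsf{out}(q_{12})=\mathsf{in}(q_{23})$, hence equal. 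A mechanical inspection of the $3\times3$ combinations of these letter types matches each possibility to exactly one of cases $(3)$--$(11)$ — the required agreement of subscripts being precisely the equality just derived — and in each of them deleting the leading letter of both $q_{12}$ and $q_{23}$ strips the first letter off both $\mathsf{out}(q_{12})$ and $\mathsf{in}(q_{23})$, preserving the invariant.

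Since case $(12)$ is thus never taken and each recursive call strictly decreases $|q_{12}|+|q_{23}|$, the recursion terminates with a string over $(\Gamma_\Sigma\cup\{\esb\})^*$, so $\compose(p_{12},p_{23})$ is well defined. I expect the main obstacle to be the bookkeeping around the degenerate cases in which one residual path is exhausted before the other: one must check, via the invariant, that this can only occur when the still-nonempty path consists solely of $\esx$-letters (on the $p_{12}$ side) or solely of $\esv$-letters (on the $p_{23}$ side), so that cases $(1)$/$(2)$ continue to fire until both are empty. The $3\times3$ case split itself, and the auxiliary $\mathsf{in}/\mathsf{out}$ characterization of $\apply$, are routine once this is in place.
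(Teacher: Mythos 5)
Your proof is correct and follows essentially the same route as the paper's: a well-founded induction over the recursion of $\compose$ with an exhaustive case analysis on the leading letters, where the key observation in each case is that the letter written by $p_{12}[1]$ and the letter read by $p_{23}[1]$ are both the first letter of the intermediate string $s_2$ and hence coincide, so exactly one of cases (0)--(11) fires. The only real difference is packaging: you carry the single invariant $\mathsf{out}(q_{12})=\mathsf{in}(q_{23})$ instead of the three strings $s_1,s_2,s_3$ together with the unfolding of $\apply$ at each step --- a tidy simplification, but not a different argument.
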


\begin{proof}
The proof is by structural induction  on $\compose$. 
For the \textbf{base case}, we have that $p_{12}=p_{23}=\varepsilon$.
Then $p_{13}=\varepsilon$. Thus $\compose$ reaches case (0) and is well defined.

For the \textbf{induction step} we have $p_{12}\neq \varepsilon$ or $p_{23}\neq \varepsilon$.
If $p_{12}=\varepsilon$ then it follows from the definition of $\apply$ that $s_1=s_2=\varepsilon$. 
Given that $\apply(p_{23},\varepsilon)$ is defined we get that $p_{23}[1] = \esv_\sigma$.
From the definition of $\apply$ we have $s_3=\sigma\cdot\apply(p_{23}[2..],s_2)$. Hence $s_3[2..]=\apply(p_{23}[2..],s_2)$.
Therefore, $\compose$ reaches case (2) and will never reach case (12) since
from the induction hypothesis it follows that $\compose(p_{12},p_{23}[2..])$ is well defined.

If $p_{23}=\varepsilon$ we get $s_2=s_3=\varepsilon$ and $p_{12}[1] = \esx_\sigma$.
Hence  $\compose$ reaches case (1) and similar reasoning shows that the induction hypothesis holds for the recursive application,
and thus the result is well defined.

Otherwise the first character of $p_{12}$ is not $\esx$ and the  first character of $p_{23}$ is not $\esv$.
We consider the remaining cases, by examining first the first letter of $p_{12}$.

\begin{enumerate}
    \item \textbf{Case} $p_{12}[1] = \esv_{\sigma_1}$.\\ 
    From the definition of $\apply$ we get that $s_2 = \sigma_1 \cdot s_2[2..]$ and $s_2[2..]=\apply(p_{12}[2..],s_1)$.
   
        \begin{enumerate}
            \item \textbf{Subcase} $p_{23}[1]=\esc_{(\sigma_2,\sigma_3)}$.\\
            From the definition of $\apply$ it follows that ${\sigma_1=\sigma_2}$, $s_3 = \sigma_3 \cdot s_3[2..]$ and $s_3[2..]=\apply(p_{23}[2..],s_2[2..])$.
            Thus $\compose$ reaches case (10) and the induction hypothesis holds for the recursive application.
            \item \textbf{Subcase} $p_{23}[1]=\esd_{\sigma_2}$.\\
            Similarly, from the definition of $\apply$ we get that $\sigma_1=\sigma_2$, $s_3 = \sigma_2 \cdot s_3[2..]$ and furthermore $s_3[2..]=\apply(p_{23}[2..],s_2[2..])$.
            Thus $\compose$ reaches case (9) and the induction hypothesis holds for the recursive application.
            \item \textbf{Subcase} $p_{23}[1]=\esx_{\sigma_2}$.\\
            Similarly, from the definition of $\apply$ we get that $\sigma_1=\sigma_2$ and $s_3=\apply(p_{23}[2..],s_2[2..])$.
            Thus $\compose$ reaches case (11) and the induction hypothesis holds for the recursive application.
        \end{enumerate}

    \item \textbf{Case} $p_{12}[1] = \esc_{(\sigma_1,\sigma_2)}$.\\ 
    From the definition of $\apply$ we get that $s_1 = \sigma_1 \cdot s_1[2..]$, $s_2 = \sigma_2 \cdot s_2[2..]$ and furthermore $s_2[2..]=\apply(p_{12}[2..],s_1[2..])$.
        
    \begin{enumerate}
        \item \textbf{Subcase} $p_{23}[1]=\esc_{(\sigma_3,\sigma_4)}$.\\
            From the definition of $\apply$ we get that ${\sigma_2=\sigma_3}$, $s_3 = \sigma_4 \cdot s_3[2..]$ and $s_3[2..]=\apply(p_{23}[2..],s_2[2..])$.
            Thus $\compose$ reaches case (6) and the induction hypothesis holds for the recursive application.
        \item \textbf{Subcase} $p_{23}[1]=\esd_{\sigma_3}$.\\
            Similarly, from the definition of $\apply$ it follows that ${\sigma_2=\sigma_3}$, $s_3 = \sigma_3 \cdot s_3[2..]$ and $s_3[2..]=\apply(p_{23}[2..],s_2[2..])$.
            Thus $\compose$ reaches case (8) and the induction hypothesis holds for the recursive application.
        \item \textbf{Subcase} $p_{23}[1]=\esx_{\sigma_3}$.\\
            Similarly, from the definition of $\apply$ we get that ${\sigma_2=\sigma_3}$ and $s_3=\apply(p_{23}[2..],s_2[2..])$.
            Thus $\compose$ reaches case (7) and the induction hypothesis holds for the recursive application.
    \end{enumerate}

    \item \textbf{Case} $p_{12}[1] = \esd_{\sigma}$\\
    From the definition of $\apply$ we get that $s_1 = \sigma\cdot s_1[2..]$, $s_2 = \sigma \cdot s_2[2..]$ and $s_2[2..]=\apply(p_{12}[2..],s_1[2..])$.
    
    \begin{enumerate}
        \item \textbf{Subcase} $p_{23}[1]=\esc_{(\sigma_1,\sigma_2)}$.\\
            From the definition of $\apply$ it follows that ${\sigma=\sigma_1}$, $s_3 = \sigma_2 \cdot s_3[2..]$ and $s_3[2..]=\apply(p_{23}[2..],s_2[2..])$.
            Thus $\compose$ reaches case (4) and the induction hypothesis holds for the recursive application.
        
        \item \textbf{Subcase} $p_{23}[1]=\esd_{\sigma_2}$.\\
            Similarly, from the definition of $\apply$ it follows that ${\sigma=\sigma_2}$, ${s_3 = \sigma_2 \cdot s_3[2..]}$ and furthermore $s_3[2..]=\apply(p_{23}[2..],s_2[2..])$.
            Thus $\compose$ reaches case (3) and the induction hypothesis holds for the recursive application.
        
        \item \textbf{Subcase} $p_{23}[1]=\esx_{\sigma_2}$.\\
            Similarly, from the definition of $\apply$ we get that ${\sigma=\sigma_2}$  and $s_3=\apply(p_{23}[2..],s_2[2..])$.
            Thus $\compose$ reaches case (5) and the induction hypothesis holds for the recursive application. \qedhere
    \end{enumerate}
\end{enumerate}
\end{proof}

Recall that the $\compose$ returns  a string over $\Gamma_\Sigma\cup\{\esb\}$
while $\apply$ first argument is expected to be a string over $\Gamma_\Sigma$.
We can convert the string  returned by $\compose$ to a string over $\Gamma_\Sigma$ 
 by simply removing the $\esb$ symbols.
To make this
precise we introduce 
 the function  $h\colon\Gamma_\Sigma \cup \{\esb\}\rightarrow \Gamma_\Sigma$  defined as follows
 $h(\gamma)= \varepsilon$ if $ \gamma=\esb $ and $h(\gamma)=\gamma$ otherwise;
 
and its natural extension $h\colon(\Gamma_\Sigma \cup \{\esb\})^*\rightarrow \Gamma_\Sigma^*$
defined as $h(\gamma_1\gamma_2\cdots\gamma_n)=h(\gamma_1)h(\gamma_2)\cdots h(\gamma_n)$.

We are now ready to state that $\compose$ fulfills its task, namely if it returns  $p_{13}$ then $h(p_{13})$ is an edit path from $s_1$ to $s_3$ and its weight and length satisfy \autoref{d} and \autoref{l}.
Note that  even if $p_{12}$ and $p_{23}$ are optimal, $h(p_{13})$ is not necessarily an optimal path from $s_1$ to $s_3$. Since the optimal path is no worse than $h(p_{13})$, it is enough for our purpose that $h(p_{13})$ is better than going through $s_2$.

\begin{proposition}\label{prop:properties-of-p13}
	Let $s_1,s_2,s_3\in\Sigma^*$ and $p_{12},p_{23}$ be edit paths, such that $\apply(p_{12},s_1)=s_2$, $\apply(p_{23},s_2)=s_3$. 
	Let $p_{13}=\compose(p_{12},p_{23})$. Let $d_{*}=\weight(p_{*})$ and $l_{*}=\len(p_{*})$ for $*\in\{12,23,12\}$.
	Then the following holds
	\vspace{-3mm}
	\begin{multicols}{3}
	\begin{enumerate}
		\item $\apply(h(p_{13}),s_1)=s_3$ 
		\item    $d_{13} \le d_{12}+d_{23}$
		\item $l_{13} \ge \max\{l_{12},l_{23}\}$
	\end{enumerate}	  
	\end{multicols}
\end{proposition}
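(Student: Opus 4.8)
The plan is to prove the three items simultaneously by structural induction on the recursive definition of $\compose$, following exactly the case split used in the proof of \autoref{lemma:compose_well_defined}. That proof already records, for each of the cases $(0)$--$(11)$, which case of \autoref{def:compose} is taken, how each of $s_1,s_2,s_3$ splits into a leading letter (possibly none) and a suffix, and that the recursive call is applied to edit paths that again transform the corresponding, strictly shorter, strings into one another --- so all the recursive hypotheses we need are already in hand. What remains is to account, in each case, for the contribution of the emitted edit symbol to $\apply$, to $\weight$, and to $\len$.

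In the base case $(0)$ we have $p_{12}=p_{23}=\varepsilon$, hence $s_1=s_2=s_3=\varepsilon$ and $p_{13}=\varepsilon$, so $h(p_{13})=\varepsilon$ with $\apply(\varepsilon,\varepsilon)=\varepsilon=s_3$, while $d_{13}=0=d_{12}+d_{23}$ and $l_{13}=0=\max\{l_{12},l_{23}\}$. For the induction step, fix one case, let $\gamma$ be the symbol it emits, let $p_{13}'$ be the value of the recursive call, and write $d_{12}',d_{23}',d_{13}'$ and $l_{12}',l_{23}',l_{13}'$ for the weights and lengths of the two suffixes fed to the recursion and of $p_{13}'$. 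The induction hypothesis then supplies $\apply(h(p_{13}'),s_1')=s_3'$ for the appropriate strings, $d_{13}'\le d_{12}'+d_{23}'$, and $l_{13}'\ge\max\{l_{12}',l_{23}'\}$.

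Item~1 falls out by unfolding one step of $\apply$. When $\gamma\in\Gamma_\Sigma$ we have $h(\gamma\cdot p_{13}')=\gamma\cdot h(p_{13}')$, and the matching condition on the first letter of $s_1$ recorded in \autoref{lemma:compose_well_defined} is precisely what lets the relevant case of $\apply$ consume $\gamma$, reducing the computation to $\apply(h(p_{13}'),s_1')$, which equals $s_3'$ by hypothesis; prepending the letter that $\apply$ outputs, when there is one, recovers $s_3$. The sole exception is case $(11)$, where $\gamma=\esb$ and $h(\esb\cdot p_{13}')=h(p_{13}')$: there the $\esv$ of $p_{12}$ and the $\esx$ of $p_{23}$ cancel, $s_1$ is handed unchanged to the recursion, and so $\apply(h(p_{13}'),s_1)=s_3$ already holds by hypothesis.

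For items~2 and~3 I would verify, case by case, two local facts: that $\weight(\gamma)$ is at most the combined weight of the letters consumed from $p_{12}$ and $p_{23}$, which with $d_{13}=\weight(\gamma)+d_{13}'$ and the hypothesis gives $d_{13}\le d_{12}+d_{23}$; and that $\len(\gamma)+l_{13}'$ dominates both $l_{12}$ and $l_{23}$. In cases $(3)$--$(10)$ one letter is consumed from each path and $\len(\gamma)=1$, so $l_{13}=1+l_{13}'\ge 1+\max\{l_{12}',l_{23}'\}=\max\{l_{12},l_{23}\}$; in cases $(1)$ and $(2)$ a letter is taken from only one path and the elementary inequality $1+\max\{a,b\}\ge\max\{1+a,b\}$ closes the gap. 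The delicate case is $(11)$: the cancelling pair $(\esv_\sigma,\esx_\sigma)$ has combined weight $2$ but, if simply erased, would contribute length $0$ --- too little to keep pace with $l_{12}$ and $l_{23}$. This is exactly why \autoref{def:compose} emits $\esb$, with $\weight(\esb)=2$ and $\len(\esb)=2$: then $d_{13}=2+d_{13}'\le d_{12}+d_{23}$ and $l_{13}=2+l_{13}'\ge\max\{1+l_{12}',1+l_{23}'\}=\max\{l_{12},l_{23}\}$ both go through. I expect this calibration of $\esb$ --- weighting and measuring the new symbol so that \emph{both} inequalities survive --- to be the only real subtlety; every other case is immediate arithmetic. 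As a coda, since each edit symbol has weight no larger than its length we also get $\weight(p_{13})\le\len(p_{13})$, so deleting the $\esb$'s via $h$ cannot raise the cost; combined with items~2 and~3 this yields $\cost(h(p_{13}))\le\frac{d_{12}+d_{23}}{\max\{l_{12},l_{23}\}}\le\frac{d_{12}}{l_{12}}+\frac{d_{23}}{l_{23}}$, which is the inequality the triangle-inequality argument in the sequel rests on.
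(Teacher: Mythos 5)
Your proposal is correct and follows essentially the same route as the paper's own proof: structural induction over the cases of $\compose$, reusing the string decompositions established for \autoref{lemma:compose_well_defined}, with the per-case bookkeeping of the emitted symbol's weight and length, and the same identification of case (11) and the calibration $\weight(\esb)=\len(\esb)=2$ as the one nontrivial point. The closing remark about $h$ not raising the cost is not part of the proposition but matches the paper's subsequent corollaries.
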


\begin{proof}
The proof is by structural induction  on $\compose$. 
For the \textbf{base case}, we have that $p_{12}=p_{23}=\varepsilon$.
Then $p_{13}=\varepsilon$, by definition of apply we get that $s_1=s_2=s_3=\varepsilon$. Thus \\
1.  $\apply(h(p_{13}),s_1) = \apply(\varepsilon,\varepsilon) = \varepsilon=s_1=s_3$\\
2. and 3. we have that $d_{13} = 0 \leq d_{12}+d_{23}=0$ and $l_{13}= 0 \ge \max\{l_{12},l_{23}\} = 0$

For the \textbf{induction steps}, we have $p_{12}\neq \varepsilon$ or $p_{23}\neq \varepsilon$. 
Recall that $p_{13}=\compose(p_{12},p_{23})$. Thus, from \autoref{lemma:compose_well_defined} we can conclude $p_{13}$ is a string over $\Gamma_\Sigma\cup\{\esb\}$.
Let $s'_{*}=s_{*}[2..]$, $p'_{*}=p_{*}[2..]$, $d'_{*}=\weight(p'_{*})$, $l'_{*}=\len(p'_{*})$ for $*\in\{12,23,13\}$.
The proof  proceeds with the case analysis of $\compose$, going over  cases (1)-(11) of \autoref{def:compose}.

\begin{enumerate}[label=(\arabic*)]
    \item Here $p_{12}[1]=\esx_\sigma$. \\ 
    Then from $\apply$ we have $s_1 = \sigma\cdot s_1'$, from definition of $\compose$ we have $p_{13} = \esx_\sigma \cdot p_{13}'$\\
    Since $s_2=\apply(p_{12},s_1)=\apply(\esx_\sigma\cdot p_{12}',\sigma\cdot s_1')=\apply(p_{12}',s_1')$ and $\apply(p_{23},s_2)=s_3$, by applying the induction hypotheses on $s'_1,s_2,s_3$ we get
    \vspace{-3mm}
    \begin{itemize}[nosep]
        \item []
        \begin{multicols}{3}
        \begin{enumerate}[nosep,label=\arabic*.]
            \item $\apply(h(p_{13}'),s_1') = s_3$ 
            \item $d_{13}' \leq d_{12}' + d_{23}$ 
            \item $l_{13}' \ge \max\{l_{12}',l_{23}\}$
            \end{enumerate}
        \end{multicols}
    \end{itemize}
    \vspace{-2mm}
    Therefore
    \begin{itemize}[nosep]
        \item []
        \begin{enumerate}[nosep,label=\arabic*.]
            \item $\apply(h(p_{13}),s_1) = \apply(\esx_\sigma \cdot h(p_{13}'),\sigma\cdot s_1')= \apply(h(p_{13}'),s_1') = s_3$
            \item $d_{13} = 1 + d_{13}' \leq 1 + d_{12}' + d_{23} = d_{12} + d_{23}$
            \item $l_{13} = 1 + l_{13}' \ge 1 + \max\{l_{12}', l_{23}\} \ge \max\{1+l_{12}',l_{23}\} = \max\{l_{12},l_{23}\}.$
        \end{enumerate}
    \end{itemize}

    \item Here $p_{23}[1]=\esv_\sigma$.\\ 
    Then from $\apply$ we have $s_3 = \sigma\cdot s_3'$, from definition of $\compose$ we have $p_{13} = \esv_\sigma \cdot p_{13}'$.
    Since $\apply(p_{23},s_2)=\apply(\esv_\sigma\cdot p_{23}',s_2)=\sigma\cdot\apply(p_{23}',s_2)=s_3 = \sigma\cdot s_3'$ we get $\apply(p_{23}',s_2)=s_3'$ and $\apply(p_{12},s_1)=s_2$,
    by applying the induction hypotheses on $s_1,s_2,s_3'$ we get
    \vspace{-3mm}
    \begin{itemize}[nosep]
        \item []
        \begin{multicols}{3}
        \begin{enumerate}[nosep,label=\arabic*.]
            \item $\apply(h(p_{13}'),s_1) = s_3'$ 
            \item $d_{13}' \leq d_{12} + d_{23}'$ 
            \item $l_{13}' \ge \max\{l_{12},l_{23}'\}.$
        \end{enumerate}
        \end{multicols}
    \end{itemize}
    \vspace{-2mm}
    Therefore
    \begin{itemize}[nosep]
        \item []
        \begin{enumerate}[nosep,label=\arabic*.]
            \item $\apply(h(p_{13}),s_1) = \apply(\esv_\sigma \cdot h(p_{13}'),s_1)=
            \sigma\cdot \apply(h(p_{13}'),s_1) = \sigma\cdot s_3' = s_3$
            \item $d_{13} = 1 + d_{13}' \leq d_{12} + 1 + d_{23}' = d_{12} + d_{23}$
            \item $l_{13} = 1 + l_{13}' \ge 1 + \max\{l_{12}, l_{23}'\} \ge
            \max\{l_{12},1+l_{23}'\} = \max\{l_{12},l_{23}\}.$
       \end{enumerate}
    \end{itemize}

    \item Here $(p_{12}[1],p_{23}[1]) = (\esd_\sigma,\esd_\sigma)$.\\
    From the definition of $\compose$ we have $p_{13} = \esd_\sigma \cdot p_{13}'$ and from $\apply$ we have \\
    $\begin{array}{l@{\,=\,}l}\apply(p_{12},s_1) & \apply(\esd_\sigma\cdot p_{12}',\sigma\cdot s_1')=
    \sigma\cdot \apply(p_{12}',s_1') = \sigma\cdot s_2'=s_2\end{array}$ 
    and \\
    $\begin{array}{l}\apply(p_{23},s_2)=
    \apply(\esd_\sigma\cdot p_{23}',\sigma\cdot s_2')=
    \sigma\cdot \apply(p_{23}',s_2') = \sigma\cdot s_3'=s_3.\end{array}$\\
    Since $\apply(p_{12}',s_1') = s_2'$ and $\apply(p_{23}',s_2') = s_3'$
    , by applying the induction hypotheses on $s_1',s_2',s_3'$ we get
    \vspace{-3mm}
    \begin{itemize}[nosep]
        \item []
        \begin{multicols}{3}
        \begin{enumerate}[nosep,label=\arabic*.]
            \item $\apply(h(p_{13}'),s_1') = s_3'$ 
            \item $d_{13}' \leq d_{12}' + d_{23}'$ 
            \item $l_{13}' \ge \max\{l_{12}',l_{23}'\}.$
        \end{enumerate}
        \end{multicols}
    \end{itemize}
    \vspace{-2mm}
    Therefore
    \begin{itemize}[nosep]
        \item []
        \begin{enumerate}[nosep,label=\arabic*.]
            \item $\apply(h(p_{13}),s_1) = \apply(\esd_\sigma \cdot h(p_{13}'),\sigma\cdot s_1')= \sigma\cdot \apply(h(p_{13}'),s_1') = \sigma\cdot s_3' = s_3$
            \item $d_{13} = d_{13}' \leq d_{12}' + d_{23}' = d_{12} + d_{23}$
            \item $l_{13} = 1 + l_{13}' \ge 1 + \max\{l_{12}', l_{23}'\} =
            \max\{1+l_{12}',1+l_{23}'\} = \max\{l_{12},l_{23}\}.$
       \end{enumerate}
    \end{itemize}
    
    \item Here $(p_{12}[1],p_{23}[1]) = (\esd_{\sigma'},\esc_{(\sigma',\sigma)})$.\\
    By definition of compose we get $p_{13} = \esc_{(\sigma',\sigma)}\cdot p_{13}'$.
    From $\apply$ we have\\ 
    $\begin{array}{l}\apply(p_{12},s_1) = \apply(\esd_{\sigma'}\cdot p_{12}',\sigma'\cdot s_1')=
    \sigma'\cdot \apply(p_{12}',s_1') = \sigma'\cdot s_2'=s_2  \text{ and }\\
    \apply(p_{23},s_2) = \apply(\esc_{(\sigma',\sigma)}\cdot p_{23}',\sigma'\cdot s_2')= \sigma\cdot \apply(p_{23}',s_2') = \sigma\cdot s_3'=s_3.\end{array}$\\
    Since $\apply(p_{12}',s_1') = s_2'$ and $\apply(p_{23}',s_2') = s_3'$,
    by applying the induction hypotheses on $s_1',s_2',s_3'$ we get
    \vspace{-3mm}
    \begin{itemize}[nosep]
        \item []
        \begin{multicols}{3}
        \begin{enumerate}[nosep,label=\arabic*.]
            \item $\apply(h(p_{13}'),s_1') = s_3'$ 
            \item $d_{13}' \leq d_{12}' + d_{23}'$ 
            \item $l_{13}' \ge \max\{l_{12}',l_{23}'\}.$
        \end{enumerate}
        \end{multicols}
    \end{itemize}
    \vspace{-2mm}
    Therefore
    \begin{itemize}[nosep]
        \item []
        \begin{enumerate}[nosep,label=\arabic*.]
            \item 
            $\apply(h(p_{13}),s_1) = \apply(\esc_{(\sigma',\sigma)} \cdot h(p_{13}'),\sigma'\cdot s_1') = \sigma\cdot \apply(h(p_{13}'),s_1') = \sigma\cdot s_3' = s_3$
            \item $d_{13} = 1 + d_{13}' \leq d_{12}' + 1 + d_{23}' = d_{12} + d_{23}$
            \item $l_{13} = 1 + l_{13}' \ge 1 + \max\{l_{12}', l_{23}'\} = \max\{1+l_{12}',1+l'_{23}\} = \max\{l_{12},l_{23}\}.$
       \end{enumerate}
    \end{itemize}

    \item Here $(p_{12}[1],p_{23}[1]) = (\esd_\sigma,\esx_\sigma)$.\\
    By definition of compose we get that $p_{13} = \esx_\sigma\cdot p_{13}'$.
    From $\apply$ we have\\
     $\begin{array}{l}\apply(p_{12},s_1) = \apply(\esd_\sigma\cdot p_{12}',\sigma\cdot s_1')= \sigma\cdot \apply(p_{12}',s_1') = \sigma\cdot s_2'=s_2  \text{ and }\\
    \apply(p_{23},s_2) = \apply(\esx_\sigma\cdot p_{23}',\sigma\cdot s_2') = \apply(p_{23}',s_2') = s_3.\end{array}$\\
    Since $\apply(p_{12}',s_1') = s_2'$ and $\apply(p_{23}',s_2') = s_3$
    , by applying the induction hypotheses on $s_1',s_2',s_3$ we get
    \vspace{-3mm}
    \begin{itemize}[nosep]
        \item []
        \begin{multicols}{3}
        \begin{enumerate}[nosep,label=\arabic*.]
            \item $\apply(h(p_{13}'),s_1') = s_3$ 
            \item $d_{13}' \leq d_{12}' + d_{23}'$ 
            \item $l_{13}' \ge \max\{l_{12}',l_{23}'\}.$
        \end{enumerate}
        \end{multicols}
    \end{itemize}
    \vspace{-2mm}
    Therefore
    \begin{itemize}[nosep]
        \item []
        \begin{enumerate}[nosep,label=\arabic*.]
            \item $\apply(h(p_{13}),s_1) = \apply(\esx_\sigma \cdot h(p_{13}'),\sigma\cdot s_1')= \apply(h(p_{13}'),s_1') = s_3$
            \item $d_{13} = 1 + d_{13}' \leq d_{12}' + 1 + d_{23}' = d_{12} + d_{23}$
            \item $l_{13} = 1 + l_{13}' \ge 1 + \max\{l_{12}', l_{23}'\} =
            \max\{1+l_{12}',1+l'_{23}\} = \max\{l_{12},l_{23}\}.$
       \end{enumerate}
    \end{itemize}

    \item Here $(p_{12}[1],p_{23}[1]) = (\esc_{(\sigma_1,\sigma_2)},\esc_{(\sigma_2,\sigma_3)})$.\\
    By definition of compose we get that $p_{13} = \esc_{(\sigma_1,\sigma_3)}\cdot p_{13}'$.
    From $\apply$ we have\\
    $\begin{array}{l}\apply(p_{12},s_1) =   \apply(\esc_{(\sigma_1,\sigma_2)}\cdot p_{12}',\sigma_1\cdot s_1') = \sigma_2\cdot \apply(p_{12}',s_1') = \sigma_2\cdot s_2'=s_2  \text{ and }\\
    \apply(p_{23},s_2) = \apply(\esc_{(\sigma_2,\sigma_3)}\cdot p_{23}',\sigma_2\cdot s_2') = \sigma_3\cdot \apply(p_{23}',s_2') = \sigma_3\cdot s_3'=s_3.\end{array}$\\
    Since $\apply(p_{12}',s_1') = s_2'$ and $\apply(p_{23}',s_2') = s_3'$
    , by applying the induction hypotheses on $s_1',s_2',s_3'$ we get
    \vspace{-3mm}
    \begin{itemize}[nosep]
        \item []
        \begin{multicols}{3}
        \begin{enumerate}[nosep,label=\arabic*.]
            \item $\apply(h(p_{13}'),s_1') = s_3'$ 
            \item $d_{13}' \leq d_{12}' + d_{23}'$ 
            \item $l_{13}' \ge \max\{l_{12}',l_{23}'\}.$
        \end{enumerate}
        \end{multicols}
    \end{itemize}
    \vspace{-2mm}
    Therefore
    \begin{itemize}[nosep]
        \item []
        \begin{enumerate}[nosep,label=\arabic*.]
           \item 
           $\apply(h(p_{13}),s_1) = \apply(\esc_{(\sigma_1,\sigma_3)} \cdot h(p_{13}'),\sigma_1\cdot s_1') = \sigma_3\cdot \apply(h(p_{13}'),s_1') = \sigma_3 s_3' = s_3$
            \item $d_{13} = 1 + d_{13}' \leq 1 + d_{12}' + d_{23}' < 1 + d_{12}' + 1 + d_{23}' = d_{12} + d_{23}$
            \item $l_{13} = 1 + l_{13}' \ge 1 + \max\{l_{12}', l_{23}'\} = \max\{1+l_{12}',1+l'_{23}\} = \max\{l_{12},l_{23}\}.$
       \end{enumerate}
    \end{itemize}

    \item Here $(p_{12}[1],p_{23}[1]) = (\esc_{(\sigma_1,\sigma_2)},\esx_{\sigma_2})$.\\
    By definition of compose we get that $p_{13} = \esx_{\sigma_1}\cdot p_{13}'$.
    From $\apply$ we have\\
    $\begin{array}{l}\apply(p_{12},s_1) = \apply(\esc_{(\sigma_1,\sigma_2)}\cdot p_{12}',\sigma_1\cdot s_1')=
    \sigma_2\cdot \apply(p_{12}',s_1') = \sigma_2\cdot s_2'=s_2  \text{ and } \\
    \apply(p_{23},s_2) = \apply(\esx_{\sigma_2}\cdot p_{23}',\sigma_2\cdot s_2')= \apply(p_{23}',s_2') = s_3.\end{array}$\\
    Since $\apply(p_{12}',s_1') = s_2'$ and $\apply(p_{23}',s_2') = s_3$
    , by applying the induction hypotheses on $s_1',s_2',s_3$ we get
    \vspace{-3mm}
    \begin{itemize}[nosep]
        \item []
        \begin{multicols}{3}
        \begin{enumerate}[nosep,label=\arabic*.]
            \item $\apply(h(p_{13}'),s_1') = s_3$ 
            \item $d_{13}' \leq d_{12}' + d_{23}'$ 
            \item $l_{13}' \ge \max\{l_{12}',l_{23}'\}.$
        \end{enumerate}
        \end{multicols}
    \end{itemize}
    \vspace{-2mm}
    Therefore
    \begin{itemize}[nosep]
        \item []
        \begin{enumerate}[nosep,label=\arabic*.]
           \item $\apply(h(p_{13}),s_1) = \apply(\esx_{\sigma_1} \cdot h(p_{13}'),\sigma_1\cdot s_1')= \apply(h(p_{13}'),s_1') = s_3$
            \item $d_{13} = 1 + d_{13}' \leq 1 + d_{12}' + d_{23}' < 1 + d_{12}' + 1 + d_{23}' = d_{12} + d_{23}$
            \item $l_{13} = 1 + l_{13}' \ge 1 + \max\{l_{12}', l_{23}'\} =
            \max\{1+l_{12}',1+l'_{23}\} = \max\{l_{12},l_{23}\}.$
       \end{enumerate}
    \end{itemize}

    \item Here $(p_{12}[1],p_{23}[1]) = (\esc_{(\sigma',\sigma)},\esd_\sigma)$.\\
    By definition of compose we get that $p_{13} = \esc_{(\sigma',\sigma)}\cdot p_{13}'$.
    From $\apply$ we have\\
    $\begin{array}{l}\apply(p_{12},s_1) = \apply(\esc_{(\sigma',\sigma)}\cdot p_{12}',\sigma'\cdot s_1') = \sigma\cdot \apply(p_{12}',s_1') = \sigma\cdot s_2'=s_2  \text{ and }\\
    \apply(p_{23},s_2) = \apply(\esd_\sigma\cdot p_{23}',\sigma\cdot s_2') = \sigma\cdot \apply(p_{23}',s_2') = \sigma\cdot s_3'=s_3.\end{array}$\\
    Since $\apply(p_{12}',s_1') = s_2'$ and $\apply(p_{23}',s_2') = s_3'$,
    by applying the induction hypotheses on $s_1',s_2',s_3'$ we get
    \vspace{-3mm}
    \begin{itemize}[nosep]
        \item []
        \begin{multicols}{3}
        \begin{enumerate}[nosep,label=\arabic*.]
            \item $\apply(h(p_{13}'),s_1') = s_3'$ 
            \item $d_{13}' \leq d_{12}' + d_{23}'$ 
            \item $l_{13}' \ge \max\{l_{12}',l_{23}'\}.$
        \end{enumerate}
        \end{multicols}
    \end{itemize}
    \vspace{-2mm}
    Therefore
    \begin{itemize}[nosep]
        \item []
        \begin{enumerate}[nosep,label=\arabic*.]
           \item 
           $\apply(h(p_{13}),s_1) = \apply(\esc_{(\sigma',\sigma)} \cdot h(p_{13}'),\sigma'\cdot s_1')= 
           \sigma\cdot \apply(h(p_{13}'),s_1') = \sigma\cdot s_3' = s_3$
            \item $d_{13} = 1 + d_{13}' \leq 1 + d_{12}' + d_{23}' = d_{12} + d_{23}$
            \item $l_{13} = 1 + l_{13}' \ge 1 + \max\{l_{12}', l'_{23}\} =
            \max\{1+l_{12}',1+l'_{23}\} = \max\{l_{12},l_{23}\}.$
       \end{enumerate}
    \end{itemize}

    \item Here $(p_{12}[1],p_{23}[1]) = (\esv_\sigma,\esd_\sigma)$.\\
    By definition of compose we get that $p_{13} = \esv_\sigma\cdot p_{13}'$.
    From $\apply$ we have\\
    $\begin{array}{l}\apply(p_{12},s_1) = \apply(\esv_\sigma\cdot p_{12}',s_1) = \sigma\cdot \apply(p_{12}',s_1) = \sigma\cdot s_2'=s_2  \text{ and }\\
    \apply(p_{23},s_2) = \apply(\esd_\sigma\cdot p_{23}',\sigma\cdot s_2') = \sigma\cdot \apply(p_{23}',s_2') = \sigma\cdot s_3'=s_3.\end{array}$\\
    Since $\apply(p_{12}',s_1) = s_2'$ and $\apply(p_{23}',s_2') = s_3'$,
    by applying the induction hypotheses on $s_1,s_2',s_3'$ we get
    \vspace{-3mm}
    \begin{itemize}[nosep]
        \item []
        \begin{multicols}{3}
        \begin{enumerate}[nosep,label=\arabic*.]
            \item $\apply(h(p_{13}'),s_1) = s_3'$ 
            \item $d_{13}' \leq d_{12}' + d_{23}'$ 
            \item $l_{13}' \ge \max\{l_{12}',l_{23}'\}.$
        \end{enumerate}
        \end{multicols}
    \end{itemize}
    \vspace{-2mm}
    Therefore
    \begin{itemize}[nosep]
        \item []
        \begin{enumerate}[nosep,label=\arabic*.]
           \item $\apply(h(p_{13}),s_1) = \apply(\esv_\sigma \cdot h(p_{13}'),s_1)=
            \sigma\cdot \apply(h(p_{13}'),s_1) = \sigma\cdot s_3' = s_3$
            \item $d_{13} = 1 + d_{13}' \leq 1 + d_{12}' + d_{23}' = d_{12} + d_{23}$
            \item $l_{13} = 1 + l_{13}' \ge 1 + \max\{l_{12}', l_{23}'\} =
            \max\{1+l_{12}',1+l'_{23}\} = \max\{l_{12},l_{23}\}.$
       \end{enumerate}
    \end{itemize}

    \item Here $(p_{12}[1],p_{23}[1]) = (\esv_{\sigma_1},\esc_{(\sigma_1,\sigma_2)})$.\\
    By definition of compose we get that $p_{13} = \esv_{\sigma_2}\cdot p_{13}'$.
    From $\apply$ we have\\
    $\begin{array}{l}\apply(p_{12},s_1) = \apply(\esv_{\sigma_1}\cdot p_{12}',s_1) = \sigma_1\cdot \apply(p_{12}',s_1) = \sigma_1\cdot s_2'=s_2  \text{ and }\\
    \apply(p_{23},s_2) = \apply(\esc_{(\sigma_1,\sigma_2)}\cdot p_{23}',\sigma_1\cdot s_2') = \sigma_2\cdot \apply(p_{23}',s_2') = \sigma_2\cdot s_3'=s_3.\end{array}$\\
    Since $\apply(p_{12}',s_1) = s_2'$ and $\apply(p_{23}',s_2') = s_3'$,
    by applying the induction hypotheses on $s_1,s_2',s_3'$ we get
    \vspace{-3mm}
    \begin{itemize}[nosep]
        \item []
        \begin{multicols}{3}
        \begin{enumerate}[nosep,label=\arabic*.]
            \item $\apply(h(p_{13}'),s_1) = s_3'$ 
            \item $d_{13}' \leq d_{12}' + d_{23}'$ 
            \item $l_{13}' \ge \max\{l_{12}',l_{23}'\}.$
        \end{enumerate}
        \end{multicols}
    \end{itemize}
    \vspace{-2mm}
    Therefore
    \begin{itemize}[nosep]
        \item []
        \begin{enumerate}[nosep,label=\arabic*.]
            \item 
            $\apply(h(p_{13}),s_1) = \apply(\esv_{\sigma_2} \cdot h(p_{13}'),s_1)= \sigma_2\cdot \apply(h(p_{13}'),s_1) = \sigma_2\cdot s_3' = s_3$
            \item $d_{13} = 1 + d_{13}' \leq 1 + d_{12}' + d_{23}' <$ $ 1 + d_{12}' + 1 + d_{23}' = d_{12} + d_{23}$
            \item $l_{13} = 1 + l_{13}' \ge 1 + \max\{l_{12}', l_{23}'\} = \max\{1+l_{12}',1+l'_{23}\} = \max\{l_{12},l_{23}\}.$
       \end{enumerate}
    \end{itemize}

    \item Here $(p_{12}[1],p_{23}[1]) = (\esv_\sigma,\esx_\sigma)$.\\
    By definition of compose we get that $p_{13} = \esb\cdot p_{13}'$.
    From $\apply$ we have\\
    $\begin{array}{l}\apply(p_{12},s_1) = \apply(\esv_\sigma\cdot p_{12}',s_1) = \sigma\cdot \apply(p_{12}',s_1) = \sigma\cdot s_2'=s_2  \text{ and }\\
    \apply(p_{23},s_2) = \apply(\esx_\sigma\cdot p_{23}',\sigma\cdot s_2') = \apply(p_{23}',s_2')=s_3.\end{array}$\\
    Since $\apply(p_{12}',s_1) = s_2'$ and $\apply(p_{23}',s_2') = s_3$,
     by applying the induction hypotheses on $s_1,s_2',s_3$ we get
    \vspace{-3mm}
    \begin{itemize}[nosep]
        \item []
        \begin{multicols}{3}
        \begin{enumerate}[nosep,label=\arabic*.]
            \item $\apply(h(p_{13}'),s_1) = s_3$ 
            \item $d_{13}' \leq d_{12}' + d_{23}'$ 
            \item $l_{13}' \ge \max\{l_{12}',l_{23}'\}.$
        \end{enumerate}
        \end{multicols}
    \end{itemize}
    \vspace{-2mm}
    Therefore
    \begin{itemize}[nosep]
        \item []
        \begin{enumerate}[nosep,label=\arabic*.]
            \item $\apply(h(p_{13}),s_1) = \apply(h(p_{13}'),s_1) = s_3$
            \item $d_{13} = 2 + d_{13}' \leq 1 + d_{12}' + 1 + d_{23}' = d_{12} + d_{23}$
            \item $l_{13} = 2 + l_{13}' > 1 + \max\{l_{12}', l_{23}'\} =
            \max\{1+l_{12}',1+l'_{23}\} = \max\{l_{12},l_{23}\}.$ \qedhere
       \end{enumerate}
    \end{itemize}
\end{enumerate}
\end{proof}

The sequel makes use of the following lemmas regarding non-negative integers $d$ and $l$.
\begin{lemma}\label{plus1}
	 If $d\le l$ then $\frac{d+1}{l+1} \ge \frac{d}{l}$
\end{lemma}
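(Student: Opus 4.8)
The plan is to reduce the claimed inequality to the hypothesis $d \le l$ by a single cross-multiplication. First I would dispose of the degenerate case: if $l = 0$, then $0 \le d \le l$ forces $d = 0$, and under the convention that a ratio with zero denominator is read as $0$ (the same convention used in the definition of $\cost$), the right-hand side is $0$ while the left-hand side is $\frac{1}{1} = 1$, so the inequality holds.

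So assume $l \ge 1$. Then the denominators $l$ and $l+1$ are both strictly positive, so multiplying both sides of $\frac{d+1}{l+1} \ge \frac{d}{l}$ through by the positive quantity $l(l+1)$ produces an equivalent inequality, namely $l(d+1) \ge d(l+1)$. Expanding both sides gives $ld + l \ge ld + d$, which after cancelling $ld$ is just $l \ge d$ --- exactly the hypothesis. Reading the chain of equivalences in reverse establishes the lemma.

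There is no real obstacle here: the statement is an elementary arithmetic fact. The only points deserving a moment's attention are that multiplying an inequality by the positive quantity $l(l+1)$ preserves its direction (so that the manipulation is a genuine equivalence, not merely a one-way implication), and that the $l = 0$ corner case must be handled by the zero-denominator convention rather than by the cross-multiplication argument.
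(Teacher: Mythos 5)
Your proof is correct and takes essentially the same route as the paper's: clearing denominators by the positive factor $l(l+1)$ and reducing the inequality to the hypothesis $d \le l$. Your explicit handling of the $l=0$ corner case is a small point of extra care that the paper's one-line computation silently skips over.
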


\begin{proof}
	$\frac{d+1}{l+1} = \frac{l(d+1)}{l(l+1)} \ge
	\frac{d(l+1)}{l(l+1)}= \frac{d}{l}.$
\end{proof}

\begin{lemma}\label{plus}
	 If $d_{13} \le d_{12}+d_{23}$ and
	$l_{13} \ge \max\{l_{12},l_{23}\}$ 
	then $\frac{d_{12}}{l_{12}}+\frac{d_{23}}{l_{23}}
	\ge \frac{d_{13}}{l_{13}}$.
\end{lemma}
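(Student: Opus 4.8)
The plan is to prove \autoref{plus} by a short chain of elementary inequalities, after clearing away the degenerate cases in which a denominator vanishes. First I would observe that we may assume $l_{12},l_{23},l_{13}>0$: whenever some $l_*=0$ the matching path is empty, so the corresponding $d_*$ is $0$ and the term $\frac{d_*}{l_*}$ is read (as everywhere in this paper) through the convention $\cost(\varepsilon)=0$; in particular, if $l_{13}=0$ then $l_{13}\ge\max\{l_{12},l_{23}\}$ forces $l_{12}=l_{23}=0$ and all three ratios are $0$, so the claim reads $0\ge 0$. (If exactly one of $l_{12},l_{23}$ is $0$, the computation below still goes through verbatim with that summand equal to $0$.)

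The core computation is then the following. Since $d_{13}\le d_{12}+d_{23}$ and $l_{13}>0$,
\[
\frac{d_{13}}{l_{13}}\;\le\;\frac{d_{12}+d_{23}}{l_{13}}\;=\;\frac{d_{12}}{l_{13}}+\frac{d_{23}}{l_{13}}.
\]
Now I would invoke $l_{13}\ge\max\{l_{12},l_{23}\}$: because $d_{12}\ge 0$ and $0<l_{12}\le l_{13}$ we get $\frac{d_{12}}{l_{13}}\le\frac{d_{12}}{l_{12}}$, and symmetrically $\frac{d_{23}}{l_{13}}\le\frac{d_{23}}{l_{23}}$. Chaining these gives
\[
\frac{d_{13}}{l_{13}}\;\le\;\frac{d_{12}}{l_{12}}+\frac{d_{23}}{l_{23}},
\]
which is exactly the assertion of the lemma.

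There is essentially no real obstacle here; the statement is a two-line inequality and the only point deserving a word of care is the bookkeeping around zero-length edit paths, which is disposed of by the convention $\cost(\varepsilon)=0$ fixed in the preliminaries. Note that \autoref{plus1} is not needed for this particular proof — it serves a separate role, namely showing that erasing the auxiliary \esb\ letters from $\compose(p_{12},p_{23})$ does not increase the cost. Combined with \autoref{prop:properties-of-p13} (taking $p_{12},p_{23}$ optimal) and that erasure step, \autoref{plus} yields $\ned(s_1,s_3)\le\frac{d_{12}}{l_{12}}+\frac{d_{23}}{l_{23}}=\ned(s_1,s_2)+\ned(s_2,s_3)$, i.e. the triangle inequality.
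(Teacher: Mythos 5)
Your proof is correct and follows exactly the paper's own argument: bound $\frac{d_{13}}{l_{13}}$ by $\frac{d_{12}+d_{23}}{l_{13}}$, split the fraction, and then enlarge each term using $l_{13}\ge l_{12}$ and $l_{13}\ge l_{23}$. The only difference is that you explicitly dispose of the zero-denominator cases, which the paper leaves implicit; this is a reasonable bit of extra care, not a different approach.
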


\begin{proof} $\frac{d_{13}}{l_{13}} \le \frac{d_{12}+d_{23}}{l_{13}} =
	\frac{d_{12}}{l_{13}} + \frac{d_{23}}{l_{13}} \le
	\frac{d_{12}}{l_{12}} + \frac{d_{23}}{l_{23}}$.
\end{proof}

Recall that $\cost$ is defined as $\weight$ divided by $\len$.
Let $p_{13}$ be the string obtained by compose in \autoref{prop:properties-of-p13}.
Then by items 2 and 3 we know that
\vspace{-10mm}
\begin{multicols}{2}
\begin{equation}
  \weight(p_{13}) \le \weight(p_{12})+\weight(p_{23}) 
\end{equation}
\begin{equation}
  \len(p_{13}) \ge \max\{\len(p_{12}),\len(p_{23})\}
\end{equation}
\end{multicols}
\noindent
We can thus conclude from \autoref{plus} that the cost of the path obtained by $\compose$
is at most the sum of the costs of the edit paths from which
it was obtained, as stated in the following corollary.

\begin{corollary}\label{cor:cost-without-h}
	Let $s_1,s_2,s_3\in\Sigma^*$ and $p_{12},p_{23}$ be edit paths, such that $\apply(p_{12},s_1)=s_2$, $\apply(p_{23},s_2)=s_3$. 
	Let $p_{13}=\compose(p_{12},p_{23})$. Then $\cost(p_{13})\leq \cost(p_{12})+\cost(p_{23})$.
\end{corollary}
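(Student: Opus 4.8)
The plan is to read this off directly from \autoref{prop:properties-of-p13} and \autoref{plus}, after unfolding the definition $\cost(p)=\weight(p)/\len(p)$. First I would invoke \autoref{prop:properties-of-p13} on the given $s_1,s_2,s_3$ and $p_{12},p_{23}$: it tells us that $p_{13}=\compose(p_{12},p_{23})$ satisfies $d_{13}\le d_{12}+d_{23}$ and $l_{13}\ge\max\{l_{12},l_{23}\}$, where $d_{*}=\weight(p_{*})$ and $l_{*}=\len(p_{*})$ for $*\in\{12,23,13\}$. These are precisely the two hypotheses of \autoref{plus}.

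If $l_{12}>0$ and $l_{23}>0$, then $l_{13}\ge\max\{l_{12},l_{23}\}>0$ as well, so all three costs are the genuine fractions $\cost(p_{*})=d_{*}/l_{*}$, and \autoref{plus} yields $\cost(p_{12})+\cost(p_{23})=\frac{d_{12}}{l_{12}}+\frac{d_{23}}{l_{23}}\ge\frac{d_{13}}{l_{13}}=\cost(p_{13})$, which is the claim.

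It then remains to dispose of the degenerate case in which one of the given paths is empty, say $l_{12}=0$ (the case $l_{23}=0$ is symmetric). Since every edit letter has length at least $1$, $l_{12}=0$ forces $p_{12}=\varepsilon$, hence $d_{12}=0$, $\cost(p_{12})=0$, and $s_1=s_2$; \autoref{prop:properties-of-p13} still gives $d_{13}\le d_{23}$ and $l_{13}\ge l_{23}$. If $l_{23}>0$ then $l_{13}>0$ and $\cost(p_{13})=\frac{d_{13}}{l_{13}}\le\frac{d_{23}}{l_{13}}\le\frac{d_{23}}{l_{23}}=\cost(p_{23})=\cost(p_{12})+\cost(p_{23})$; if $l_{23}=0$ then $p_{23}=\varepsilon$, so $d_{23}=0$, whence $d_{13}=0$ and $\cost(p_{13})=0=\cost(p_{12})+\cost(p_{23})$. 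I do not expect a genuine obstacle here: the substantive content — that the composed path has small enough weight and large enough length — has already been established in \autoref{prop:properties-of-p13}, and the only point that needs a little care is these empty-path cases, where $\cost$ is governed by the separate clause $\cost(p)=0$ rather than by the fraction, so \autoref{plus} cannot be applied verbatim.
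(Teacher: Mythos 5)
Your proposal is correct and follows essentially the same route as the paper: invoke \autoref{prop:properties-of-p13} (items 2 and 3) to obtain $d_{13}\le d_{12}+d_{23}$ and $l_{13}\ge\max\{l_{12},l_{23}\}$, then conclude via \autoref{plus}. Your explicit treatment of the empty-path cases (where $\cost$ is defined to be $0$ rather than a fraction) is a welcome bit of extra care that the paper passes over silently, but it does not change the argument.
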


We are not done yet, since $p_{13}$ contains $\esb$ symbols, and thus it is not really an edit path.
Let $k$ be the number of $\esb$'s in $p_{13}$. Then $\weight(p_{13})=2 k + \weight(h(p_{13}))$
and $\len(p_{13})=2 k + \len(h(p_{13}))$, applying $2k$ times~\autoref{plus1}, we conclude that
$\begin{array}{l}\frac{\weight(p_{13})}{\len(p_{13})}\ge\frac{\weight(h(p_{13}))}{\len(h(p_{13}))}.\end{array}$

\begin{corollary}\label{cor:removing-h}
$\cost(p)\ge\cost(h(p))$
\end{corollary}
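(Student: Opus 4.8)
The plan is to reduce the statement to iterated applications of \autoref{plus1}. First I would unpack what $h$ does: it deletes exactly the occurrences of $\esb$ in $p$ and leaves every other letter untouched. Hence, writing $k$ for the number of occurrences of $\esb$ in $p$, and using that $\weight(\esb)=\len(\esb)=2$ together with the additivity of $\weight$ and $\len$ over concatenation, we get $\weight(p)=\weight(h(p))+2k$ and $\len(p)=\len(h(p))+2k$.

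Next I would record the elementary fact that $\weight(q)\le\len(q)$ for every string $q$ over $\Gamma_\Sigma\cup\{\esb\}$: this holds letter by letter, since $\weight(\esd)=0\le 1=\len(\esd)$, $\weight(\gamma)=1=\len(\gamma)$ for $\gamma\in\{\esc,\esv,\esx\}$, and $\weight(\esb)=2=\len(\esb)$, and then follows for strings by summing. In particular $d:=\weight(h(p))\le\len(h(p))=:l$.

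If $l=0$ then $h(p)=\varepsilon$, so $\cost(h(p))=0\le\cost(p)$ and we are done. Otherwise $l\ge 1$, and I would apply \autoref{plus1} a total of $2k$ times: starting from the pair $(d,l)$ with $d\le l$, each application replaces $(d,l)$ by $(d+1,l+1)$, which again satisfies $d+1\le l+1$, and only weakly increases the ratio. After $2k$ steps we obtain
\[
\cost(p)=\frac{\weight(p)}{\len(p)}=\frac{d+2k}{l+2k}\ \ge\ \frac{d}{l}=\frac{\weight(h(p))}{\len(h(p))}=\cost(h(p)),
\]
as required.

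The argument is entirely routine; the only points needing a moment's care are the degenerate case $\len(h(p))=0$ (where the displayed ratio is undefined and one simply uses $\cost(\varepsilon)=0$) and the observation that the hypothesis $d\le l$ of \autoref{plus1} is preserved along the $2k$ iterations, which it is precisely because adding $1$ to both numerator and denominator maintains the inequality.
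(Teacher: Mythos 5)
Your proof is correct and follows essentially the same route as the paper: decompose $\weight(p)=\weight(h(p))+2k$ and $\len(p)=\len(h(p))+2k$ where $k$ counts the $\esb$'s, then apply \autoref{plus1} a total of $2k$ times. You are in fact slightly more careful than the paper, which leaves implicit both the verification of the hypothesis $d\le l$ and the degenerate case $\len(h(p))=0$.
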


\begin{proposition}\label{prop:triangle}
    The normalized edit distance obeys the triangle inequality.
\end{proposition}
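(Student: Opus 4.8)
The plan is to assemble the pieces already established. Fix $s_1,s_2,s_3\in\Sigma^*$. By \autoref{def:ned} the minimum defining $\ned$ is attained, so I would first pick an optimal edit path $p_{12}$ from $s_1$ to $s_2$ (i.e.\ $\apply(p_{12},s_1)=s_2$ and $\cost(p_{12})=\ned(s_1,s_2)$) and, likewise, an optimal edit path $p_{23}$ from $s_2$ to $s_3$ with $\cost(p_{23})=\ned(s_2,s_3)$.

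Next I would run the composition machinery on this pair. Set $p_{13}=\compose(p_{12},p_{23})$. By \autoref{lemma:compose_well_defined} this is well-defined (never hits case (12)), so $p_{13}\in(\Gamma_\Sigma\cup\{\esb\})^*$, and by item~1 of \autoref{prop:properties-of-p13} we have $\apply(h(p_{13}),s_1)=s_3$; hence $h(p_{13})\in\Gamma_\Sigma^*$ is a genuine edit path from $s_1$ to $s_3$. This is the step that legitimately connects $s_1$ to $s_3$ and lets us bound $\ned(s_1,s_3)$ from above by $\cost(h(p_{13}))$.

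Finally I would chain the cost inequalities: \autoref{cor:removing-h} gives $\cost(h(p_{13}))\le\cost(p_{13})$ (removing the $\esb$ abbreviations can only decrease the cost, via repeated use of \autoref{plus1}), and \autoref{cor:cost-without-h} gives $\cost(p_{13})\le\cost(p_{12})+\cost(p_{23})$ (which itself comes from items~2 and~3 of \autoref{prop:properties-of-p13} fed into \autoref{plus}). Putting these together,
\[
\ned(s_1,s_3)\ \le\ \cost(h(p_{13}))\ \le\ \cost(p_{13})\ \le\ \cost(p_{12})+\cost(p_{23})\ =\ \ned(s_1,s_2)+\ned(s_2,s_3),
\]
which is exactly the triangle inequality. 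Since the genuinely hard work — designing $\compose$ and verifying the weight bound \autoref{d} and the length bound \autoref{l} through the full case analysis — is already done in \autoref{prop:properties-of-p13}, there is no real obstacle left here; the only point to be careful about is that the minimum in \autoref{def:ned} is attained so that the optimal $p_{12},p_{23}$ exist, which holds because for fixed $s_i,s_j$ only finitely many edit paths need to be considered (longer paths cannot have smaller cost).
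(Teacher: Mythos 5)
Your proof is correct and follows essentially the same route as the paper's: choose optimal $p_{12},p_{23}$, form $p_{13}=\compose(p_{12},p_{23})$, and chain \autoref{cor:removing-h}, \autoref{cor:cost-without-h}, and item~1 of \autoref{prop:properties-of-p13} to get $\ned(s_1,s_3)\le\cost(h(p_{13}))\le\cost(p_{13})\le\ned(s_1,s_2)+\ned(s_2,s_3)$. Your added remark that the minimum in \autoref{def:ned} is attained (since only finitely many edit paths need be considered) is a small point the paper leaves implicit, but otherwise the arguments coincide.
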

\begin{proof}
Let $s_1,s_2,s_3\in\Sigma^*$ and $p_{12},p_{23}$ be  optimal edit paths. 
That is, $\apply(p_{12},s_1)=s_2$ and $\apply(p_{23},s_2)=s_3$ and 
$\ned(s_1,s_2)= \cost(p_{12})$ and $\ned(s_2,s_3)= \cost(p_{23})$.
	Let $p_{13}=\compose(p_{12},p_{23})$.
	From \autoref{cor:cost-without-h} we get that $\cost(p_{13})\leq \cost(p_{12})+\cost(p_{23})$.
	From \autoref{prop:properties-of-p13} it holds that $h(p_{13})$ is a valid edit path over $\Gamma_\Sigma$.
	From \autoref{cor:removing-h} we get that $\cost(h(p_{13}))\leq\cost(p_{13})$.
	By definition of \ned\ as it chooses the minimal cost of an edit path, $\ned(s_1,s_3)\leq \cost(h(p_{13}))$.
	To conclude, we get
	$\ned(s_1,s_3) \leq \ned(s_1,s_2)+\ned(s_2,s_3)$. 
\end{proof}

\begin{theorem}
The Normalized Levenshtein Distance $\ned$ (provided in \autoref{def:ned}) 
with uniform costs (i.e., where the cost of all inserts, deletes and swaps are some constant $c$) is a metric
on the space $\Sigma^*$.
\end{theorem}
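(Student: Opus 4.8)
The plan is to assemble the statement from the three ingredients already proved and then reduce the general uniform-cost case to the unit-cost case used throughout the paper. First I would observe that it suffices to treat the case $c=1$: if every operation in $\{\esc,\esv,\esx\}$ is given weight $c$ (and $\esd$ weight $0$), then for any edit path $p$ the value $\weight(p)$ is exactly $c$ times the number of non-$\esd$ letters of $p$, while $\len(p)$ is unchanged; hence $\cost(p)$ computed with common cost $c$ equals $c$ times $\cost(p)$ computed with common cost $1$ (both being $0$ when $p=\varepsilon$). Taking the minimum over all edit paths from $s_i$ to $s_j$ in \autoref{def:ned}, the value of \ned\ with cost $c$ is $c$ times the value of \ned\ with cost $1$. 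Since $c>0$, multiplying a metric by $c$ again yields a metric, so it is enough to show that \ned\ with unit cost is a metric on $\Sigma^*$.

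For unit cost, the three metric axioms are now in hand. Identity of indiscernibles, $\ned(s_1,s_2)=0$ iff $s_1=s_2$, is items~1 and~2 of \autoref{prop:reflexivity-and-symmetry}; symmetry, $\ned(s_1,s_2)=\ned(s_2,s_1)$, is item~3 of the same proposition; and the triangle inequality is \autoref{prop:triangle}. Combining these shows that $(\Sigma^*,\ned)$ is a metric space, and the scaling observation above extends this to every positive constant cost $c$.

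There is no real obstacle remaining at this stage: all the substance lives in \autoref{prop:triangle}, whose proof runs through the $\compose$ construction of \autoref{def:compose}, the two inequalities $d_{13}\le d_{12}+d_{23}$ and $l_{13}\ge\max\{l_{12},l_{23}\}$ of \autoref{prop:properties-of-p13}, and the elementary \autoref{plus} together with \autoref{plus1} (the latter absorbing the auxiliary $\esb$ symbols). The only point I would be careful to state in the write-up is that the reduction genuinely needs $c>0$: a common operation cost of $0$ would make the distance identically $0$, and this degenerate case is excluded by the hypothesis that $c$ is a (positive) constant cost.
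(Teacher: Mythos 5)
Your proposal is correct and follows essentially the same route as the paper: the theorem is assembled directly from \autoref{prop:reflexivity-and-symmetry} (identity of indiscernibles and symmetry) and \autoref{prop:triangle} (triangle inequality). The explicit reduction from a general positive cost $c$ to unit cost via scaling is a nice touch that the paper handles only implicitly in the preliminaries, but it does not change the substance of the argument.
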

\begin{proof}
    The first two conditions of being a metric follow from \autoref{prop:reflexivity-and-symmetry}. The third condition, namely triangle inequality, follows from \autoref{prop:triangle}.
\end{proof}

\section{Conclusions}

We closed a gap regarding the normalized version of the editing distance proposed by Marzal and Vidal, denoted here as \ned. Marzal and Vidal noted that \ned\ is not a metric in general and left open the question of whether it is a metric in case all weights are equal. This open point, spawned two versions of a 
normalized  editing distance that have been proven to be metrics --- \ged\ and \ced. We proved that, with uniform weights, \ned\ is also a metric. To pinpoint the benefits of \ned\ over the other distances we have defined a number of properties that \ned\ maintains and \ced\ and/or \ged\ do not. The motivation for formulating the properties as we did comes from formal verification, so is our interest in uniform weights.


\bibliographystyle{plainurl}

\bibliography{bib.bib}

\begin{thebibliography}{1}

\bibitem{arslan2000efficient}
Abdullah~N Arslan and Omer Egecioglu.
\newblock Efficient algorithms for normalized edit distance.
\newblock {\em Journal of Discrete Algorithms}, 1(1):3--20, 2000.

\bibitem{HigueraM08}
Colin de~la Higuera and Luisa Mic{\'{o}}.
\newblock A contextual normalised edit distance.
\newblock In {\em Proceedings of the 24th International Conference on Data
  Engineering Workshops, {ICDE} 2008, April 7-12, 2008, Canc{\'{u}}n, Mexico},
  pages 354--361. {IEEE} Computer Society, 2008.

\bibitem{FiliotMR0020}
Emmanuel Filiot, Nicolas Mazzocchi, Jean{-}Fran{\c{c}}ois Raskin, Sriram
  Sankaranarayanan, and Ashutosh Trivedi.
\newblock Weighted transducers for robustness verification.
\newblock In {\em 31st International Conference on Concurrency Theory, {CONCUR}
  2020, September 1-4, 2020, Vienna, Austria (Virtual Conference)}, pages
  17:1--17:21, 2020.

\bibitem{NEDarxiv}
Dana Fisman, Joshua Grogin, Oded Margalit, and Gera Weiss.
\newblock The normalized edit distance with uniform operation costs is a
  metric.
\newblock {\em CoRR}, abs/2201.06115, 2022.
\newblock URL: \url{https://arxiv.org/abs/2201.06115}, \href
  {http://arxiv.org/abs/2201.06115} {\path{arXiv:2201.06115}}.

\bibitem{Levenshtein66}
Vladimir~Iosifovich Levenshtein.
\newblock Binary codes capable of correcting deletions, insertions and
  reversals.
\newblock {\em Soviet Physics Doklady}, 10(8):707--710, feb 1966.
\newblock Doklady Akademii Nauk SSSR, V163 No4 845-848 1965.

\bibitem{YujianB07}
Yujian Li and Bi~Liu.
\newblock A normalized levenshtein distance metric.
\newblock {\em {IEEE} Trans. Pattern Anal. Mach. Intell.}, 29(6):1091--1095,
  2007.

\bibitem{CEDwebsite}
Christopher~C. Little.
\newblock
  \url{https://abydos.readthedocs.io/en/latest/abydos.distance.html\#abydos.distance.HigueraMico}.

\bibitem{MarzalV93}
Andr{\'{e}}s Marzal and Enrique Vidal.
\newblock Computation of normalized edit distance and applications.
\newblock {\em {IEEE} Trans. Pattern Anal. Mach. Intell.}, 15(9):926--932,
  1993.

\bibitem{VidalMA95}
Enrique Vidal, Andr{\'{e}}s Marzal, and Pablo Aibar.
\newblock Fast computation of normalized edit distances.
\newblock {\em {IEEE} Trans. Pattern Anal. Mach. Intell.}, 17(9):899--902,
  1995.
\newblock \href {https://doi.org/10.1109/34.406656}
  {\path{doi:10.1109/34.406656}}.

\end{thebibliography}

\appendix

\section{Appendix}\label{app:proofs}

We provide here two proofs that we could not fit in the body of the paper.

\begin{customthm}{\autoref{prop:reflexivity-and-symmetry}}[restated]
Let $s,s_1,s_2\in\Sigma^*$. Then
\begin{multicols}{2}
\begin{enumerate} 
\item $\ned(s,s)=0$ 
\item if $s_1\neq s_2$ then $\ned(s_1,s_2)>0$
\item $\ned(s_1,s_2)=\ned(s_2,s_1)$
\end{enumerate}
\end{multicols}
\end{customthm}

\begin{proof}
	 First clearly, if $s\neq\varepsilon$ then $\esd^{|s|}$ is an edit path from $s$ to $s$, and thus $\ned(s,s)=\frac{0}{|s|}=0$. 
	 Second, if $s_1\neq s_2$ then any edit path from $s_1$ to $s_2$ must contain at least one non-\esd character. Thus, its cost is $\frac{d}{l}$ for some $d>0$, implying $\ned(s_1,s_2)>0$.
	Third, assume $p_{12}=\gamma_1\gamma_2\ldots\gamma_k$ is an edit path from $s_1$ to $s_2$.
	Define $\overline{p_{12}}=\overline{\gamma_1}\, \overline{\gamma_2} \ldots\overline{\gamma_k}$ where 
	\[\overline{\gamma}=\left \{ \begin{array}{ll}
															\esd_\sigma & \mbox{if } \gamma=\esd_\sigma \\
															\esc_{(\sigma_2,\sigma_1)} & \mbox{if } \gamma=\esc_{(\sigma_1,\sigma_2)} \\
															\esx_\sigma & \mbox{if } \gamma=\esv_\sigma\\
															\esv_\sigma & \mbox{if } \gamma=\esx_\sigma																														
														\end{array}\right .\]
	Then $\overline{p_{12}}$ is an edit path from $s_2$ to $s_1$ and the cost they induce is the same.
	Hence, if  ${p_{12}}$ is a minimal edit path from $s_1$ to $s_2$ then  $\overline{p_{12}}$ is a minimal edit path from $s_2$ to $s_1$ implying $\ned(s_1,s_2)=\ned(s_2,s_1)$. 	\end{proof}

\begin{customthm}{\autoref{claim:alignment}}[restated]
Let $\Sigma,\Sigma_1,\Sigma_2$ be disjoints nonempty alphabets. 
	Let $s'_1\in\Sigma\uplus\Sigma_1$  and $s'_2\in\Sigma\uplus \Sigma_2$ 
	and $p'$ an edit path transforming $s'_1$ to $s'_2$. There exists an edit path $p$ transforming 
	$\pi_\Sigma(s'_1)$ to $\pi_\Sigma(s'_2)$ 
	such that $\cost(p) \leq \cost(p')$.
\end{customthm}

\begin{proof}
Let $\gamma\in\Gamma$, $p'\in\Gamma_{\Sigma\cup\uplus\Sigma_1\uplus\Sigma_2}^*$.
We define  $f:\Gamma_{\Sigma\uplus\Sigma_1\uplus\Sigma_2} \rightarrow \Gamma_{\Sigma}$ as follows
  \[
  f(\gamma)=\left \{ \begin{array}{ll}
     \esl_\sigma & \text{if } \gamma = \esl_\sigma \text{ for some } \esl \in \{\esv, \esx, \esd\} \text{ and } \sigma\in\Sigma\\
     c_{\sigma,\sigma'} & \text{if } \gamma = c_{\sigma,\sigma'} \text{ and } \sigma,\sigma'\in\Sigma\\
     \esv_\sigma & \text{if } \gamma = c_{\sigma_1,\sigma} \text{ and } \sigma_1\in\Sigma_1,\ \sigma\in\Sigma\\
     \esx_\sigma & \text{if } \gamma = c_{\sigma,\sigma_2} \text{ and } \sigma\in\Sigma,\ \sigma_2\in\Sigma_2\\
     \varepsilon & \text{otherwise }
  \end{array}\right.
  \]
Let $p = f(p')$  where $f:\Gamma_{\Sigma\uplus\Sigma_1\uplus\Sigma_2}^* \rightarrow \Gamma_{\Sigma}^*$ is the natural extension of $f$
defined by $f(\gamma_1\ldots\gamma_m)=f(\gamma_1)\ldots f(\gamma_n)$.

It is not hard to see that $p$ is an edit path from $\pi_\Sigma(s'_1)$ to $\pi_\Sigma(s'_2)$. Since all removed edit operations have cost $1$ we get from \autoref{plus1} that $\cost(p) \leq \cost(p')$
\end{proof}

\end{document}